\newtheorem{thm}{Theorem}
\newtheorem{proposition}[thm]{Proposition}
\newtheorem{lemma}[thm]{Lemma}
\newtheorem{cor}[thm]{Corollary}
\newtheorem{proof}{Proof}
\newtheorem{proofa}{Proof of Theorem~\ref{thm:LOCC2}}
\newcommand{\mH}{\mathcal{H}}
\newcommand{\mL}{\mathcal{L}}
\newcommand{\rhoE}{\tilde{\rho}}
\newcommand{\PiE}{\tilde{\Pi}}
\newcommand{\PE}{\tilde{P}}
\newcommand{\PhiE}[1]{\tilde{\Phi}^{(#1)}}
\newcommand{\OmegaE}{\tilde{\Omega}}
\newcommand{\mHE}{\tilde{\mH}}
\newcommand{\ZE}{\tilde{Z}}
\newcommand{\cE}{\tilde{c}}
\newcommand{\spn}{{\rm span}}
\newcommand{\QED}{\hspace*{0pt}\hfill $\blacksquare$}
\newcommand{\Tr}{{\rm Tr}}
\begin{document}

\preprint{APS/123-QED}

\title{Optimal Discrimination of Quantum States on a Two-Dimensional Hilbert Space
by Local Operations and Classical Communication}%

\affiliation{%
 Yokohama Research Laboratory, Hitachi, Ltd.,
 Yokohama, Kanagawa 244-0817, Japan
}%
\affiliation{
 School of Information Science and Technology,
 Aichi Prefectural University,
 Nagakute, Aichi 480-1198, Japan
}%
\affiliation{%
 Quantum Information Science Research Center, Quantum ICT Research Institute,
 Tamagawa University, Machida, Tokyo 194-8610, Japan
}%

\author{Kenji Nakahira}
% \email{kenji.nakahira.kp@hitachi.com}
\affiliation{%
 Yokohama Research Laboratory, Hitachi, Ltd.,
 Yokohama, Kanagawa 244-0817, Japan
}%
\affiliation{%
 Quantum Information Science Research Center, Quantum ICT Research Institute,
 Tamagawa University, Machida, Tokyo 194-8610, Japan
}%

\author{Tsuyoshi \surname{Sasaki Usuda}}
\affiliation{
 School of Information Science and Technology,
 Aichi Prefectural University,
 Nagakute, Aichi 480-1198, Japan
}%
\affiliation{%
 Quantum Information Science Research Center, Quantum ICT Research Institute,
 Tamagawa University, Machida, Tokyo 194-8610, Japan
}%

\date{\today}% It is always \today, today,
             %  but any date may be explicitly specified

\begin{abstract}
 We study the discrimination of multipartite quantum states
 by local operations and classical communication.
 We derive that any optimal discrimination of quantum states spanning a two-dimensional Hilbert space
 in which each party's space is finite dimensional
 is possible by local operations and one-way classical communication,
 regardless of the optimality criterion used and how entangled the states are.
\end{abstract}

% PACS 03.67.-a: Quantum information
% PACS 42.50.-p: Quantum optics
\pacs{03.67.-a}% PACS, the Physics and Astronomy
                             % Classification Scheme.
%\keywords{Suggested keywords}%Use showkeys class option if keyword
                              %display desired
\maketitle

\section{Introduction}

If two or more physically separated parties cannot communicate quantum information,
their possibilities of measuring quantum states are severely restricted.
Intuitively, product states seem to be able to be optimally distinguished using only
local operations and classical communication (LOCC),
while entangled states seem to be indistinguishable.
However, Bennett {\it et al.} found that orthogonal pure product states exist
that cannot be perfectly distinguished by LOCC \cite{Ben-DiV-Fuc-Mor-Rai-Cho-Smo-Woo-1999}.
Later, Walgate {\it et al.} proved that any two pure orthogonal states in finite-dimensional systems
can be distinguished with certainty
using local operations and one-way classical communication (one-way LOCC)
no matter how entangled they are \cite{Wal-Sho-Har-Ved-2000}.
%These results encourage further investigations on the perfect distinguishability of orthogonal states
%by LOCC, and several important results have been reported \cite{LOCC}.
These results encourage further investigations on the distinguishability of quantum states
by LOCC, and several important results have been reported in the case of orthogonal states
\cite{Gho-Kar-Roy-Sen-Sen-2001,Hor-Sen-Sen-Hor-2003,Fan-2004,Hay-Mar-Mur-Owa-Vir-2006,Vir-Sac-Ple-Mar-2001,Oga-2006}.
In this paper, we consider only finite-dimensional systems.

The problem of LOCC discrimination for non-orthogonal states
is much more complicated.
One of the main reasons is that perfect discrimination between them is impossible,
even without LOCC restriction.
Instead, optimal discrimination can be sought.
Walgate {\it et al.} \cite{Wal-Sho-Har-Ved-2000} posed the question:
``Can any non-orthogonal states on a two-dimensional (2D) Hilbert space be optimally distinguished by LOCC?''
To definitively answer this question, we must consider all optimality criteria.
Various optimality criteria have been suggested,
such as the Bayesian criterion, the Neyman-Pearson criterion, and the mutual information criterion,
but the above question is not answered except for very special cases,
such as an optimal error-free measurement for two non-orthogonal pure states
\cite{Che-Yan-2002,Ji-Cao-Yin-2005}.
%Another reason is that optimal discrimination for non-orthogonal states is not always by performing
%a projective measurement in the state space but rather by performing a non-projective measurement,
%while orthogonal states can be perfectly distinguished by projective measurement.
Another reason is that optimal discrimination for non-orthogonal states often requires
a non-projective measurement on the space spanned by the given states,
while any orthogonal states can be perfectly distinguished by projective measurement.
A positive operator-valued measure (POVM) is the most general formulation of a measurement
permitted by quantum mechanics and is commonly adopted in quantum information theory \cite{Nie-Chu-2000}.
%The problem of realizing such a projective measurement using only LOCC can be reduced to
%that of discriminating corresponding orthogonal states perfectly \cite{Hol-1982,Vir-Sac-Ple-Mar-2001}.
We denote a measurement on a 2D Hilbert space as a 2D measurement.
Some important examples of 2D non-projective measurements are a measurement
maximizing the success rate for more than two states on a 2D Hilbert space
and a measurement giving the result ``don't know'' with non-zero probability,
such as an inconclusive measurement
\cite{Che-Bar-1998,Eld-2003-inc,Fiu-Jez-2003}.

Let $\mH_{\rm ex}$ be a composite Hilbert space and $\mH_{\rm sub}$ be a subspace of $\mH_{\rm ex}$.
For simplicity, we say that a measurement described by the POVM $\{ \Pi_m \}$ on $\mH_{\rm sub}$
can be realized by LOCC (or one-way LOCC) if there exists an LOCC measurement (or a one-way LOCC measurement)
described by the POVM $\{ E_m \}$
on $\mH_{\rm ex}$ such that $\Pi_m = P_{\rm sub} E_m P_{\rm sub}$ for any index $m$,
where $P_{\rm sub}$ is the orthogonal projection operator onto $\mH_{\rm sub}$.
If any measurement on $\mH_{\rm sub}$ can be realized by LOCC, then
any quantum states on $\mH_{\rm sub}$ can be optimally distinguished using only LOCC.
Walgate {\it et al.}'s question can be rephrased as
``Can any measurement on a 2D Hilbert space be realized by LOCC?''

We emphasize that this question would be quite difficult to answer.
Instead of a 2D non-projective measurement, one might consider realizing a corresponding
projective measurement, which is obtained by Naimark's theorem \cite{Neu-1940}, by LOCC.
According to Naimark's theorem, any non-projective measurement can be realized by a projective measurement
on an extended Hilbert space.
However, if a 2D non-projective measurement has more than two POVM operators,
then so does the corresponding projective measurement,
and such a measurement often cannot be realized by LOCC
\cite{Wal-Har-2002,Dua-Fen-Xin-Yin-2009,Chi-Dua-Hsi-2014}.
Thus, this approach cannot directly answer the question.
Alternatively, one might try to decompose a given 2D non-projective measurement
into several 2D projective measurements.
It is known that there exist ``decomposable'' measurements,
which statistically give the same results as randomly choosing among measurements
each of which has fewer POVM operators than the original one \cite{Dar-Pre-Per-2005}.
If a 2D measurement can be decomposed into 2D projective measurements,
then from \cite{Wal-Sho-Har-Ved-2000}, it can obviously be realized by LOCC.
However, only a few 2D non-projective measurements are decomposable \cite{Dar-Pre-Per-2005}.

In this paper, we show that any 2D measurement can be realized by one-way LOCC
no matter how many POVM operators it has.
Our result answers the above question:
A global measurement is not needed for a 2D measurement in finite-dimensional systems,
regardless of the optimality criterion used.
%Its POVM operators can be entangled and mixed states.

It is worth noting that the problem of realizing a measurement by one-way LOCC
is closely related to realizing a quantum receiver.
Realization of an optimal or suboptimal receiver for optical states using linear optical feedback
(or feedforward) and photon counting has been widely studied both theoretically and experimentally
\cite{Dol-1973,Coo-2007,Tak-Sas-Loo-Lut-2005,Tak-Sas-Lut-2006,Nak-Usu-2012-receiver,Nak-Usu-2014,Wit-And-Tak-Syc-Leu-2010,Guh-Hab-Tak-2010,Ass-Poz-Pie-2011,Mul-Usu-Wit-Tak-Mar-And-Leu-2012}.
This type of receiver performs an individual measurement on each temporal or spatial slot.
A measurement can be decomposed into such individual measurements
if it can be realized by one-way LOCC;
thus, our result indicates that any 2D measurement can be decomposed into individual measurements,
at least in finite-dimensional systems.
%Although a receiver discriminating between optical coherent states using a small special-purpose
%quantum computer has also been proposed \cite{Sil-Guh-Dut-2013},
%this type of receiver may be difficult to realize using current technology.
It is often important to investigate whether a measurement can be realized by one-way LOCC
to check whether it can be implemented using only feasible resources
when the whole system is spatially or temporally separated.

In Section~\ref{sec:preliminaries}, we present some necessary preliminaries, where
we show that any 2D measurement can be realized by one-way LOCC
if any measurement with finite rank-one POVM operators on any 2D bipartite Hilbert space
in which Alice's subspace is two-dimensional can be realized by one-way LOCC.
In Section~\ref{sec:proj}, we recall the idea of Walgate {\it et al.} \cite{Wal-Sho-Har-Ved-2000},
which provides a method for realizing a 2D projective measurement by one-way LOCC.
In Section~\ref{sec:nonproj}, we consider realizing a 2D non-projective measurement by one-way LOCC.
We show that, by extending Walgate {\it et al.}'s idea, any measurement with finite rank-one POVM operators
on any 2D bipartite Hilbert space
in which Alice's subspace is two-dimensional can be realized by one-way LOCC
(Propositions~\ref{pro:LOCC1} and \ref{pro:LOCC2}; also Theorem~\ref{thm:LOCC2}).
We conclude the paper in Section~\ref{sec:conclusion}.

\section{Preliminaries} \label{sec:preliminaries}

We first consider a bipartite system.
Let $\ket{\psi}$ and $\ket{\phi}$ be two linearly independent quantum states
shared by Alice and Bob.
We can write, in general form,
\begin{eqnarray}
 \ket{\psi} &=& \sum_n \ket{p_n}_A \ket{q_n}_B, \nonumber \\
 \ket{\phi} &=& \sum_n \ket{p_n}_A \ket{r_n}_B, \label{eq:psi_phi}
\end{eqnarray}
where $\{ \ket{p_n}_A \}$ are quantum states of Alice, and
$\{ \ket{q_n}_B \}$ and $\{ \ket{r_n}_B \}$ are quantum states of Bob.
$\{ \ket{p_n}_A \}$, $\{ \ket{q_n}_B \}$, and $\{ \ket{r_n}_B \}$ are
generally unnormalized and non-orthogonal.
Let $\mH_A = \spn(\{ \ket{p_n}_A \})$ and $\mH_B = \spn(\{ \ket{q_n}_B \}, \{ \ket{r_n}_B \})$.
Also, let $\mH$ be a 2D Hilbert space spanned by $\ket{\psi}$ and $\ket{\phi}$.
We denote such $\mH$ as a 2D $(N_A, N_B)$-space,
where $N_A = \dim~\mH_A$ and $N_B = \dim~\mH_B$.
We consider finite-dimensional systems; $N_A$ and $N_B$ are finite.
Assume that Alice and Bob share one of a known collection of $L$ quantum states
represented by density operators $\{ \rho_l \}_{l=1}^L$
on $\mH$ and want to optimally discriminate between them in a certain optimality criterion.
Our main result is that
any 2D measurement (in finite-dimensional systems) can be realized by one-way LOCC
(see Corollary~\ref{cor:LOCC}),
which indicates that any optimal discrimination of $\{ \rho_l \}$
can be realized by one-way LOCC.

We can easily extend our result to multipartite systems
in a way similar to \cite{Wal-Sho-Har-Ved-2000}.
Here, let us imagine a tripartite system:
Alice, Bob, and Charlie share two linearly independent quantum states
$\ket{\psi}$ and $\ket{\phi}$,
which can be represented by
\begin{eqnarray}
 \ket{\psi} &=& \sum_n \ket{p_n'}_A \ket{q_n'}_{BC}, \nonumber \\
 \ket{\phi} &=& \sum_n \ket{p_n'}_A \ket{r_n'}_{BC}, \label{eq:psi_phi_multi}
\end{eqnarray}
instead of (\ref{eq:psi_phi}).
In (\ref{eq:psi_phi_multi}), Bob and Charlie are first grouped as one party.
Asuume that our main result, i.e., Corollary~\ref{cor:LOCC}, holds in a bipartite system;
then, we can show that any measurement on any tripartite 2D Hilbert space can also be realized by one-way LOCC.
Indeed, Alice performs a measurement on her system according to the bipartite one-way LOCC protocol
that we will propose in this paper and tells the result to Bob and Charlie.
Then, Bob and Charlie can again use the same protocol.
This argument can easily be extended to any multipartite system,
and thus, in the rest of paper, we consider only bipartite systems.

First, we show that our main reulst can be reduced to a simpler one.
For example, from \cite{Chi-Dar-Sch-2007},
any quantum measurement with a continuous set of outcomes (including the discrete outcomes)
on a finite-dimensional Hilbert space is equivalent to a continuous random choice of measurements
with finite outcomes.
Thus, it suffices to show that any 2D measurement with finite outcomes can be
realized by one-way LOCC.
We show the following lemma:
\begin{lemma} \label{lemma:2xM}
 If any measurement with finite rank-one POVM operators on a 2D $(2,N)$-space
 ($N$ is finite integer) can be realized by one-way LOCC,
 then any 2D measurement (in finite-dimensional systems) can be realized by one-way LOCC.
\end{lemma}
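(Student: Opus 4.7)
My plan is to prove the lemma by three successive reductions, each stripping one layer of generality from an arbitrary 2D measurement.

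First, for the reduction from continuous to finite outcomes, I would invoke the Chiribella--D'Ariano--Schlingemann result cited just before the lemma: any POVM on a finite-dimensional Hilbert space is a probabilistic mixture of POVMs with finitely many outcomes, and such a mixture is realizable by one-way LOCC as soon as every component is (Alice samples the mixing label using local randomness and runs the corresponding subprotocol). Second, for the reduction from finite to rank-one POVMs, since $\mH$ is two-dimensional each $\Pi_m$ has rank at most two; spectrally decomposing $\Pi_m = \sum_k \lambda_{m,k}\ket{\pi_{m,k}}\bra{\pi_{m,k}}$ gives a rank-one refinement whose one-way LOCC realization coarse-grains back to one of $\{\Pi_m\}$ by merging outcomes sharing the same $m$.

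The third reduction---from a 2D $(N_A, N_B)$-space to a 2D $(2, N)$-space---is the main obstacle. The naive hope that $\mH$ already lives inside some 2D Alice-subspace tensored with $\mH_B$ fails in general, because the intrinsic Alice support $\supp(\Tr_B P_\mH)$ can have dimension up to the sum of the Schmidt ranks of $\ket{\psi}$ and $\ket{\phi}$. My plan is therefore to have Alice perform a preliminary POVM $\{A_k\}$ on $\mH_A$ whose operators have two-dimensional range, so that for each outcome $k$ the post-selected subspace $(\sqrt{A_k}\otimes I_B)\mH$ lies inside $(\mathrm{range}\,A_k)\otimes\mH_B$, a bipartite 2D $(2, N_B)$-space to which the lemma's hypothesis provides a one-way LOCC realization. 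Alice then folds the $(2, N)$ Alice-step (conditional on $k$) into a single local POVM, and Bob performs the matching conditional measurement.

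The technical heart of the argument is to choose $\{A_k\}$ and the conditional $(2, N)$ subprotocols jointly so that $\sum_k A_k = I_{\mH_A}$ and, for every $m$, the $\mH$-restriction $P_\mH(\sum_k A_k\otimes B_{m|k})P_\mH$ reproduces exactly $\ket{\pi_m}\bra{\pi_m}$. This amounts to decomposing the target rank-one POVM on $\mH$ as an $\{A_k\}$-weighted combination of $(2, N)$-realizable conditional rank-one POVMs on the 2D post-selected subspaces; constructing such a decomposition for an arbitrary target POVM, and handling degenerate outcomes where $(\sqrt{A_k}\otimes I_B)$ is not injective on $\mH$ so the post-selected subspace is less than 2D, is the step I expect to require the most care.
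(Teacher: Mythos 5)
Your first two reductions are sound and essentially match the paper's (the paper folds them into a single step by writing an arbitrary POVM on a 2D space as a random choice of extremal POVMs, which by D'Ariano--Lo Presti--Perinotti automatically have finitely many rank-one elements; your spectral-refinement-plus-coarse-graining route is an acceptable substitute). The genuine gap is in your third reduction, which you yourself flag as unfinished: you never construct the preliminary POVM $\{A_k\}$ with two-dimensional ranges, nor the conditional $(2,N)$ subprotocols, nor do you prove that an arbitrary rank-one target POVM on $\mH$ admits the required decomposition $\Pi_m = P_\mH\bigl(\sum_k A_k\otimes B_{m|k}\bigr)P_\mH$. As stated this is an open existence claim, not obviously true, and it is essentially as hard as the lemma you are trying to prove; the proposal therefore does not close the argument.

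The paper's proof avoids this entirely with a simpler device that your proposal misses: pad $N_A$ to an even number and tensor-factorize Alice's space as $\mH_A \cong \mH_{A1}\otimes\mH_{A2}$ with $\dim\mH_{A1}=2$. Then $\mH_A\otimes\mH_B = \mH_{A1}\otimes(\mH_{A2}\otimes\mH_B)$, so the given 2D subspace is a 2D $(2,N')$-space for the \emph{virtual} bipartition in which $\mH_{A1}$ plays Alice and $\mH_{A2}\otimes\mH_B$ plays Bob. The hypothesis (upgraded by the extremality argument to cover all measurements on such a space, not only finite rank-one ones) yields a one-way LOCC protocol across that cut; its first stage is a measurement on $\mH_{A1}$, which the real Alice performs locally, and its second stage is a measurement on a 2D subspace of $\mH_{A2}\otimes\mH_B$, i.e., the same problem with Alice's dimension halved. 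Iterating terminates after finitely many steps at a genuine $(2,N)$-space, with no need to exhibit a preliminary Alice POVM with two-dimensional ranges.
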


\begin{proof}
 Assume that any measurement with finite rank-one POVM operators on a 2D $(2,N)$-space
 (denoted by $\mH_2$) can be realized by one-way LOCC.

 First, we show that any measurement on $\mH_2$ can be realized by one-way LOCC.
 From \cite{Chi-Dar-Sch-2007},
 any quantum measurement, even if with a continuous set of outcomes, on $\mH_2$
 can always be realized as a random choice of extremal measurements on $\mH_2$,
 where an extremal measurement is an extremal point of the set of all possible POVMs,
 which is a convex set.
 Moreover, from \cite{Dar-Pre-Per-2005}, an extremal measurement on $\mH_2$ must be made of
 finite rank-one POVM operators, apart from the trivial POVM $\{ \Pi_1 = I_{\mH_2} \}$
 ($I_{\mH_2}$ is the identity operator on $\mH_2$).
 The trivial POVM is obviously realized by one-way LOCC;
 thus, we consider only a nontrivial POVM.

 Next, we show that a 2D measurement $\{ \Pi_m \}$ (on a 2D $(N_A,N_B)$-space)
 can be realized by one-way LOCC.
 Let $\mH_A$ and $\mH_B$ be Alice's and Bob's Hilbert spaces, respectively.
 The case of $N_A \le 2$ is trivial; assume that $N_A > 2$.
 Suppose without loss of generality that $N_A$ is even; otherwise,
 expand Alice's system into a $(N_A + 1)$-dimensional Hilbert space.
 Alice's system can be represented on the tensor product of two- and $(N_A/2)$-dimensional
 Hilbert spaces, denoted as $\mH_{A1}$ and $\mH_{A2}$, respectively.
 Since $\mH_A \otimes \mH_B = \mH_{A1} \otimes (\mH_{A2} \otimes \mH_B)$
 and $\dim~\mH_{A1} = 2$,
 $\{ \Pi_m \}$ can be realized by one-way LOCC between $\mH_{A1}$ and $\mH_{A2} \otimes \mH_B$.
 Thus, it suffices to show that a measurement on a 2D subspace of $\mH_{A2} \otimes \mH_B$
 can be realized by one-way LOCC.
 By repeating this procedure, the problem of realizing $\{ \Pi_m \}$ by one-way LOCC
 is reduced to the problem of realizing a measurement
 on a 2D $(2,N)$-space.
 Therefore, by the assumption, $\{ \Pi_m \}$ can be realized by one-way LOCC.
 \QED
\end{proof}

In this paper, we will prove the following theorem:
\begin{thm} \label{thm:LOCC2}
 Any measurement with finite rank-one POVM operators on a 2D $(2,N)$-space
 can be realized by one-way LOCC.
\end{thm}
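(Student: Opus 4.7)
My plan is to extend Walgate et al.'s one-way LOCC construction, which realized any two-outcome projective measurement on a 2D bipartite subspace, to realizing an arbitrary rank-one POVM with finitely many outcomes. Walgate's trick exploited Alice's freedom in choosing her measurement basis so that, conditional on each outcome, Bob's two possible states are orthogonal. The analogue here will be to choose Alice's basis and Bob's conditional POVMs so that, together, they reconstruct the target $\{\Pi_m\}$ on the 2D subspace.

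First I fix an orthonormal basis $\{\ket{0}_A,\ket{1}_A\}$ of Alice's qubit and decompose each rank-one POVM element as $\Pi_m = \ket{u_m}\bra{u_m}$ with $\ket{u_m} = \ket{0}_A\ket{a_m}_B + \ket{1}_A\ket{b_m}_B$. The $\ket{a_m},\ket{b_m}$ then lie in at most two-dimensional subspaces of $\mH_B$, and the completeness $\sum_m \Pi_m = I_\mH$ reduces to three scalar identities among them. Next I parametrize Alice's rank-one projective measurement by an orthonormal basis $\{\ket{\alpha_0},\ket{\alpha_1}\}$ of her qubit and look for Bob POVMs $\{B_{k,m}\}$ such that the combined one-way LOCC POVM $E_m = \sum_k \ket{\alpha_k}\bra{\alpha_k}\otimes B_{k,m}$ satisfies $P_{\rm sub}E_mP_{\rm sub} = \Pi_m$ for every $m$. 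Restricted to Bob's (at most 2D) conditional subspace, this fixes the action of each $B_{k,m}$ in terms of a specific linear combination of $\ket{a_m}$ and $\ket{b_m}$ determined by $\{\ket{\alpha_k}\}$, leaving the action of $B_{k,m}$ on the complementary subspace free.

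Third, I need to show that Alice's basis can be chosen so that the resulting operators $B_{k,m}$ extend to genuine POVM elements on all of $\mH_B$, i.e., are positive semi-definite and satisfy $\sum_m B_{k,m} = I_{\mH_B}$. I expect the analysis to split into two structural cases, which will become Propositions~\ref{pro:LOCC1} and \ref{pro:LOCC2}: a regular case in which the geometry of $\{\ket{a_m}\},\{\ket{b_m}\}$ admits an Alice basis making the required completion automatic (the direct analogue of Walgate's intermediate-value argument), and a degenerate case in which some of those vectors align and an auxiliary step, such as splitting the POVM into subgroups or prepending a preprocessing Alice measurement, is required. The theorem will then follow by combining the two propositions.

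The main obstacle I expect is precisely the coupling between positivity and completeness of the $B_{k,m}$: these conditions cannot be imposed element by element but only jointly, which forces the choice of Alice's basis to be made in view of the entire POVM. This is where Walgate's simple orthogonal-basis existence argument needs genuine strengthening, and it is why I anticipate that the two propositions do the substantive work and that the proof of Theorem~\ref{thm:LOCC2} itself reduces to invoking them in complementary regimes.
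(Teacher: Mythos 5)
Your plan reproduces the architecture of the paper's argument up to a point. The paper also fixes Alice's basis via Walgate's normal form (\ref{eq:pi}) for an ONB $\{\ket{\pi},\ket{\pi^\perp}\}$ of $\mH$ with $\Pi_1=\gamma_1\ket{\pi}\bra{\pi}$, and its Lemma~\ref{lemma:LOCC} is exactly the "coupling between positivity and completeness" you flag as the obstacle: Bob's conditional POVMs exist if and only if there are weights $c_m\in[0,1]$ with $\sum_m c_m\Pi_m = P(\ket{0}\bra{0}_A\otimes I_B)P$. Your "regular case" corresponds to Proposition~\ref{pro:LOCC1}, where this operator equation is solved explicitly under the inequality $\gamma_1\ge\eta_0-(1-\gamma_1)\nu_0$.

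The genuine gap is in your degenerate case. The paper notes that there are measurements for which the required weights fail to exist for \emph{every} permutation of the POVM elements, so neither "splitting the POVM into subgroups" nor prepending an ordinary projective measurement on $\mH_A$ can close the argument: any von Neumann measurement of Alice's qubit runs into the same obstruction. The missing idea is the content of Proposition~\ref{pro:LOCC2}: Alice couples her qubit to an ancilla qubit by a unitary with $U_{SA}\ket{s_0}\ket{0}_A=(\sin\theta\ket{s_0}+\cos\theta\ket{s_1})\ket{0}_A$ and $U_{SA}\ket{s_0}\ket{1}_A=\ket{s_1}\ket{1}_A$, then measures the ancilla, with the angle tuned to $\sin^2\theta=\gamma_1/(\eta_0-(1-\gamma_1)\nu_0)$ (a value in $(0,1)$ precisely when the regular case fails). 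This tuning makes two things happen at once: on outcome $s_0$ Alice's qubit is collapsed onto $\ket{0}_A$ and Bob can finish alone, while on outcome $s_1$ the residual 2D measurement has its first POVM element identically zero. The number of nonzero outcomes therefore strictly decreases, and the proof is an induction on $M$ terminating at the two-outcome (projective, Walgate) case. Without this tuned ancilla-assisted weak measurement and the accompanying induction, your plan does not terminate: each preprocessing step leaves another 2D measurement of undiminished size to realize.
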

From Lemma~\ref{lemma:2xM} and Theorem~\ref{thm:LOCC2},
we can easily obtain the following corollary (proof omitted):
\begin{cor} \label{cor:LOCC}
 Any 2D measurement (in finite-dimensional systems) can be realized by one-way LOCC.
\end{cor}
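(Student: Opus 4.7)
The plan is to build, for any rank-one POVM $\{ \Pi_m = c_m \ket{v_m}\bra{v_m} \}_{m=1}^M$ on the 2D $(2,N)$-space $\mH$, an explicit one-way LOCC protocol in which Alice measures her two-dimensional subsystem with a finite rank-one POVM $\{A_k\}$, communicates her outcome $k$ to Bob, and Bob then performs an outcome-conditioned POVM $\{B_{k,m}\}_m$. Writing the combined POVM as $E_m = \sum_k A_k \otimes B_{k,m}$, the goal is $P_\mH E_m P_\mH = \Pi_m$ for every $m$, where $P_\mH$ is the projector onto $\mH$.

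First I would fix a convenient representation. Using the freedom to apply a unitary on Alice's 2D space (which preserves one-way LOCC realizability), I pick a basis $\{\ket{0}_A,\ket{1}_A\}$ and write
\begin{eqnarray*}
 \ket{\psi} &=& \ket{0}_A\ket{q_0}_B + \ket{1}_A\ket{q_1}_B, \\
 \ket{\phi} &=& \ket{0}_A\ket{r_0}_B + \ket{1}_A\ket{r_1}_B, \\
 \ket{v_m} &=& \ket{0}_A\ket{s_{m,0}}_B + \ket{1}_A\ket{s_{m,1}}_B,
\end{eqnarray*}
so that the completeness condition $\sum_m c_m \ket{v_m}\bra{v_m} = P_\mH$ becomes a set of algebraic identities among $\{ \ket{s_{m,i}}_B \}$, $\{ \ket{q_i}_B \}$, and $\{ \ket{r_i}_B \}$. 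This makes the action of any candidate $E_m$ on $\mH$ computable directly from Alice's and Bob's local data.

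The construction then proceeds in two stages, matching Propositions~\ref{pro:LOCC1} and~\ref{pro:LOCC2}. The first stage establishes the protocol in a restricted base situation: Alice performs a rank-one measurement in an adapted basis for which the collapsed (unnormalized) states on Bob's side allow a single conditional POVM to reproduce the relevant subset of $\{\Pi_m\}$. This is a direct extension of Walgate et al.'s idea recalled in Section~\ref{sec:proj}, now permitting POVM rather than projective measurements on both sides. The second stage reduces an arbitrary rank-one POVM to finitely many instances of the base case via an iterative peel-off: at each step one isolates a sub-POVM whose weighted sum is proportional to a rank-one operator on $\mH$, handles it via the base case, and leaves a residual rank-one POVM on a 2D subspace that can be tackled by the same procedure. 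After finitely many iterations all outcomes are exhausted and the joint LOCC outcome labels are merged back to $m$.

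The step I expect to be the main obstacle is guaranteeing that Bob's conditional POVMs $\{B_{k,m}\}_m$ can simultaneously be (i) valid POVMs on his full space $\mH_B$, i.e., positive operators summing to $I_{\mH_B}$, and (ii) yield exactly $\Pi_m$ on $\mH$ after averaging over Alice's outcomes. Because $\ket{\psi}$ and $\ket{\phi}$ are in general non-orthogonal, the post-measurement states on Bob's side after Alice's outcome $k$ are non-orthogonal linear combinations of the $\ket{q_i},\ket{r_i}$, so Bob cannot rely on the projective measurements used in the orthogonal Walgate et al. setting: the $\{B_{k,m}\}$ must genuinely be POVMs. The freedom available lies in extending each $B_{k,m}$ outside the two-dimensional span of Bob's conditional states, and the algebraic identities from the $\mH$-completeness condition must leave enough room to close this gap. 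Verifying that such a valid extension always exists, especially at the base-case stage on which the peel-off induction depends, will be the technical heart of the argument.
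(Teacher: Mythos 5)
Your stage 1 matches the paper's Proposition~\ref{pro:LOCC1}: Alice measures in an adapted ONB $\{\ket{0}_A,\ket{1}_A\}$ and Bob applies outcome-conditioned POVMs, and the obstacle you flag (positivity and completeness of Bob's operators while reproducing $\Pi_m$ on $\mH$) is exactly what the paper's Lemma~\ref{lemma:LOCC} resolves: such conditional POVMs exist if and only if there are weights $c_m\in[0,1]$ with $\sum_m c_m\Pi_m = Z_0 = P(\ket{0}\bra{0}_A\otimes I_B)P$. Your stage 2, however, has a genuine gap. The paper points out that this condition can fail for every permutation of the POVM elements, so no protocol whose first move is a measurement on $\mH_A$ alone can succeed in general; and ``isolating a sub-POVM whose weighted sum is proportional to a rank-one operator and handling it via the base case'' is essentially the decomposition-into-simpler-measurements route that the introduction explicitly rules out (only few 2D non-projective measurements are decomposable). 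The missing idea is an ancilla: Alice adjoins a qubit $\mH_S$, applies a unitary $U_{SA}$ with a tuned angle satisfying $\sin^2\theta = \gamma_1/(\eta_0-(1-\gamma_1)\nu_0)$, and measures the ancilla; one outcome collapses her system onto $\ket{0}_A$ so that Bob can finish alone, while the other leaves a residual 2D measurement with strictly fewer nonzero elements (since $\tilde{\Phi}^{(1)}_1=0$), and the induction is on the number of outcomes $M$, terminating when $M\le 2$, where Walgate {\it et al.} applies. Without this weak-measurement step your peel-off has no mechanism to make progress in the regime of Proposition~\ref{pro:LOCC2}.

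Separately, the corollary concerns arbitrary 2D measurements --- arbitrary $N_A$, possibly continuous outcome sets, possibly non-rank-one elements --- whereas your proposal starts from a finite rank-one POVM on a $(2,N)$-space. The reduction to that case (extremal-POVM arguments to obtain finite rank-one elements, plus factoring Alice's space as $\mH_{A1}\otimes\mH_{A2}$ with $\dim\mH_{A1}=2$ and recursing) is the content of Lemma~\ref{lemma:2xM} and must be supplied for the corollary as stated.
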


% while detailed proofs are given in the appendices.

\section{Realization of 2D projective measurement by one-way LOCC} \label{sec:proj}

In this section, using an example, we recall the idea of Walgate {\it et al.} \cite{Wal-Sho-Har-Ved-2000},
which provides a way to realize a 2D projective measurement by one-way LOCC.
Let $\ket{\psi} = \ket{S}$ and $\ket{\phi} = \ket{T_0}$, where
\begin{eqnarray}
 \ket{S} &=& \frac{\ket{+}_A \ket{-}_B - \ket{-}_A \ket{+}_B}{\sqrt{2}}, \nonumber \\
 \ket{T_0} &=& \frac{\ket{+}_A \ket{-}_B + \ket{-}_A \ket{+}_B}{\sqrt{2}},
\end{eqnarray}
and $\{ \ket{+}_\alpha, \ket{-}_\alpha \}$ $~(\alpha \in \{ A, B \})$ is an orthonormal basis (ONB) in $\mH_\alpha$.
In this example, $\mH = \spn(\ket{S}, \ket{T_0}) \subseteq \mH_A \otimes \mH_B$ holds.
We can easily see that $\ket{S}$ and $\ket{T_0}$ are orthogonal.
If $\ket{+}_\alpha$ and $\ket{-}_\alpha$ are the spin-up and spin-down states of a spin-1/2 particle,
then $\ket{S}$ and $\ket{T_0}$ are, respectively, singlet and triplet states of two particles.
Suppose that Alice and Bob are spatially separated from each other
and share a pair of particles in a state of either $\ket{S}$ or $\ket{T_0}$.
They want to perfectly discriminate between the orthogonal states $\ket{S}$ and $\ket{T_0}$
by one-way LOCC.
This problem is identical to the problem of realizing the projective measurement
$\{ \ket{S}\bra{S}, \ket{T_0}\bra{T_0} \}$ on $\mH$ by one-way LOCC.
If Alice simply performs a measurement in the ONB $\{ \ket{+}_A, \ket{-}_A \}$,
then Bob cannot discriminate between $\ket{S}$ and $\ket{T_0}$;
for example, if the outcome of Alice's measurement is $\ket{+}_A$, then
Bob's state is transformed into $\ket{-}_B$, regardless of whether they share $\ket{S}$ or $\ket{T_0}$.
Thus, Alice needs to use a proper ONB.
$\ket{S}$ and $\ket{T_0}$ are rewritten as
\begin{eqnarray}
 \ket{S} &=& \frac{- \ket{0}_A\ket{1}_B + \ket{1}_A\ket{0}_B}{\sqrt{2}}, \nonumber \\
 \ket{T_0} &=& \frac{\ket{0}_A\ket{0}_B - \ket{1}_A\ket{1}_B}{\sqrt{2}}, \label{eq:Walgate_ex}
\end{eqnarray}
where $\{ \ket{0}_\alpha = (\ket{+}_\alpha + \ket{-}_\alpha)/\sqrt{2}, \ket{1}_\alpha = (\ket{+}_\alpha - \ket{-}_\alpha)/\sqrt{2} \}$
$~(\alpha \in \{ A, B \})$ is the ONB in $\mH_\alpha$.
Alice may just perform a measurement in the ONB $\{ \ket{0}_A, \ket{1}_A \}$ and tell the result to Bob,
and he can then find out which state they share by discriminating between $\ket{0}_B$ and $\ket{1}_B$.

From \cite{Wal-Sho-Har-Ved-2000},
for any 2D $(2,N)$-space, $\mH$, any ONB $\{ \ket{\pi}, \ket{\pi^\perp} \}$ in $\mH$
can be represented as the following form in Alice's proper ONB $\{ \ket{0}_A, \ket{1}_A \}$:
\begin{eqnarray}
 \ket{\pi} &=& \ket{0}_A \ket{\eta_0}_B + \ket{1}_A \ket{\eta_1}_B, \nonumber \\
 \ket{\pi^\perp} &=& \ket{0}_A \ket{\nu_0}_B + \ket{1}_A \ket{\nu_1}_B, \label{eq:pi}
\end{eqnarray}
where $\ket{\eta_k}_B$ and $\ket{\nu_k}_B$ are orthogonal for each $k \in \{ 0, 1 \}$
but not necessarily normalized.
We can see that (\ref{eq:Walgate_ex}) is a special form of (\ref{eq:pi}).
Similar to the above example, the projective measurement
$\{ \ket{\pi}\bra{\pi}, \ket{\pi^\perp}\bra{\pi^\perp} \}$
can be realized by one-way LOCC
if Alice measures her side of the system in the ONB $\{ \ket{0}_A, \ket{1}_A \}$
and Bob discriminates between $\ket{\eta_k}_B$ and $\ket{\nu_k}_B$.

\section{Realization of any 2D measurement by one-way LOCC} \label{sec:nonproj}

Now, we consider realizing a non-projective measurement
$\{ \Pi_m \}_{m=1}^M$ with finite rank-one POVM operators on a 2D $(2,N)$-space $\mH$ by one-way LOCC.
Let us represent $\Pi_1$ as
\begin{eqnarray}
 \Pi_1 &=& \gamma_1 \ket{\pi}\bra{\pi}, \label{eq:Pi1}
\end{eqnarray}
with $0 < \gamma_1 \le 1$ and $\braket{\pi | \pi} = 1$.
Let $\ket{\pi^\perp} \in \mH$ be a normalized vector perpendicular to $\ket{\pi}$
so that $\{ \ket{\pi}, \ket{\pi^\perp} \}$ is an ONB in $\mH$.
We choose an ONB $\{ \ket{0}_A$, $\ket{1}_A \}$ in $\mH_A$
such that $\ket{\pi}$ and $\ket{\pi^\perp}$ are expressed in the form of (\ref{eq:pi}).
Let $\mH_B^{(k)} = \spn(\ket{\eta_k}_B, \ket{\nu_k}_B)$;
then, $\mH_B = \mH_B^{(0)} \cup \mH_B^{(1)}$ obviously holds.
Also, let $P$ be the orthogonal projection operator onto $\mH$
and $I_B$ be the identity operator on $\mH_B$.
Let
\begin{eqnarray}
 \eta_k ~=~ \braket{\eta_k | \eta_k}_B, ~~ \nu_k ~=~ \braket{\nu_k | \nu_k}_B, ~~ k \in \{ 0, 1 \}.
  \label{eq:eta_nu}
\end{eqnarray}
From (\ref{eq:pi}), we have
\begin{eqnarray}
 \eta_0 + \eta_1 &=& \braket{\pi | \pi} ~=~ 1, \nonumber \\
 \nu_0 + \nu_1 &=& \braket{\pi^\perp | \pi^\perp} ~=~ 1. \label{eq:sum_eta_nu}
\end{eqnarray}
Thus, we can assume without loss of generality (by suitably permuting $\ket{0}_A$ and $\ket{1}_A$)
that $\eta_0 \ge \nu_0$.

\subsection{Simple sufficient condition for realization by one-way LOCC} \label{subsec:simple}

In this subsection, we consider the case in which there exist
Bob's measurements $\{ \Phi_m^{(0)} \}_{m=1}^M$ on $\mH_B^{(0)}$
and $\{ \Phi_m^{(1)} \}_{m=1}^M$ on $\mH_B^{(1)}$ such that for any $m$ with $1 \le m \le M$,
$\Pi_m$ is expressed by
\begin{eqnarray}
 \Pi_m &=& P \left( \ket{0}\bra{0}_A \otimes \Phi_m^{(0)} + \ket{1}\bra{1}_A \otimes \Phi_m^{(1)} \right) P.
  \label{eq:Pi_LOCC}
\end{eqnarray}
In this case, $\{ \Pi_m \}$ is realized by one-way LOCC when Alice 
measures her side of the system in the ONB $\{ \ket{0}_A, \ket{1}_A \}$,
as shown in the following lemma.

\begin{lemma} \label{lemma:LOCC_cond}
 Any measurement $\{ \Pi_m \}_{m=1}^M$ with rank-one POVM operators on a 2D $(2,N)$-space
 can be realized by one-way LOCC
 if there exist Bob's measurements $\{ \Phi_m^{(0)} \}_{m=1}^M$ and $\{ \Phi_m^{(1)} \}_{m=1}^M$
 that satisfy (\ref{eq:Pi_LOCC}).
\end{lemma}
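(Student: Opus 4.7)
The plan is to unpack (\ref{eq:Pi_LOCC}) as the POVM of an explicit one-way LOCC protocol and then verify the realization condition by a short direct calculation. In the protocol, Alice first measures her side in the ONB $\{ \ket{0}_A, \ket{1}_A \}$, broadcasts her outcome $k \in \{0,1\}$ to Bob, and Bob then performs a measurement on $\mH_B$ whose POVM elements $\tilde{\Phi}_m^{(k)}$ extend the given $\Phi_m^{(k)}$ from $\mH_B^{(k)}$ to all of $\mH_B$ by acting as zero on the orthogonal complement, supplemented by a catch-all element $I_B - I_{\mH_B^{(k)}}$ so that the extended collection sums to $I_B$. The joint POVM element for outcome $m$ is then $E_m = \ket{0}\bra{0}_A \otimes \tilde{\Phi}_m^{(0)} + \ket{1}\bra{1}_A \otimes \tilde{\Phi}_m^{(1)}$, while the catch-all branch contributes extra outcomes whose projections onto $\mH$ will be shown to vanish.

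Next I would verify $P E_m P = \Pi_m$. The key observation is that (\ref{eq:pi}) forces every vector in $\mH$ to have the form $\ket{0}_A \ket{\xi_0}_B + \ket{1}_A \ket{\xi_1}_B$ with $\ket{\xi_k}_B \in \mH_B^{(k)} = \spn(\ket{\eta_k}_B, \ket{\nu_k}_B)$. Hence $\tilde{\Phi}_m^{(k)}$ acts on $\ket{\xi_k}_B$ exactly as $\Phi_m^{(k)}$ does, while the catch-all element annihilates $\ket{\xi_k}_B$. Consequently $E_m$ and $\ket{0}\bra{0}_A \otimes \Phi_m^{(0)} + \ket{1}\bra{1}_A \otimes \Phi_m^{(1)}$ agree when sandwiched between factors of $P$, and the hypothesis (\ref{eq:Pi_LOCC}) then yields $P E_m P = \Pi_m$; the same reasoning makes the catch-all outcome's projection onto $\mH$ vanish identically.

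The only small subtlety is that the extensions must constitute a valid POVM on $\mH_B$, but this is immediate: the zero-extensions of the $\Phi_m^{(k)}$ are positive semidefinite, and the added $I_B - I_{\mH_B^{(k)}}$ is the orthogonal projector onto the complement and is therefore also positive. There is no conceptually hard step here; once the natural one-way LOCC protocol is read off from the block-diagonal form of (\ref{eq:Pi_LOCC}), everything reduces to the observation that vectors in $\mH$ have their Bob-side component supported in $\mH_B^{(k)}$ on the $\ket{k}_A$ branch, which is why the local extension on Bob's side is harmless.
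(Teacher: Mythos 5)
Your proof is correct and follows essentially the same route as the paper: Alice measures in the ONB $\{\ket{0}_A,\ket{1}_A\}$, communicates $k$, and Bob applies $\{\Phi_m^{(k)}\}$, so that the joint POVM element sandwiched by $P$ reproduces $\Pi_m$ via (\ref{eq:Pi_LOCC}). Your extra care in extending Bob's POVMs from $\mH_B^{(k)}$ to all of $\mH_B$ with a catch-all element (whose projection onto $\mH$ vanishes because every vector of $\mH$ has its Bob component on the $\ket{k}_A$ branch supported in $\mH_B^{(k)}$) is a small refinement the paper leaves implicit, but it does not change the argument.
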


\begin{proof}
 We consider the following one-way LOCC measurement:
 Alice measures her side in the ONB $\{ \ket{0}_A, \ket{1}_A \}$
 and reports the result $k \in \{ 0, 1 \}$ to Bob,
 and he then performs a corresponding measurement $\{ \Phi_m^{(k)} \}_{m=1}^M$.
 They regard Bob's result $m$ as the measurement outcome.
 This measurement can obviously be expressed by the POVM
 $\{ \Omega_m \}_{m=1}^M$, where
 \begin{eqnarray}
  \Omega_m &=& \ket{0}\bra{0}_A \otimes \Phi_m^{(0)} + \ket{1}\bra{1}_A \otimes \Phi_m^{(1)}.
 \end{eqnarray}
 From (\ref{eq:Pi_LOCC}), $\Pi_m = P \Omega_m P$ holds for any $m$ with $1 \le m \le M$,
 which means that $\{ \Pi_m \}$ can be realized by one-way LOCC.
 \QED
\end{proof}

We can derive a necessary and sufficient condition that
there exist Bob's measurements $\{ \Phi_m^{(0)} \}_{m=1}^M$ and $\{ \Phi_m^{(1)} \}_{m=1}^M$
that satisfy (\ref{eq:Pi_LOCC}) as given in the following lemma
(proof in Appendix~\ref{append:LOCC}).

\begin{lemma} \label{lemma:LOCC}
 Let $\{ \Pi_m \}_{m=1}^M$ be a 2D measurement with rank-one POVM operators on a 2D $(2,N)$-space.
 A necessary and sufficient condition that
 there exist Bob's measurements $\{ \Phi_m^{(0)} \}_{m=1}^M$ and $\{ \Phi_m^{(1)} \}_{m=1}^M$
 satisfying (\ref{eq:Pi_LOCC})
 is that $\{ c_m \}_{m=1}^M$ exists such that
 \begin{eqnarray}
  && 0 \le c_m \le 1, ~~~ 1 \le m \le M, \nonumber \\
  && \sum_{m=1}^M c_m \Pi_m = Z_0, \label{eq:sum_cm}
 \end{eqnarray}
 where
 \begin{eqnarray}
  Z_0 &=& P (\ket{0}\bra{0}_A \otimes I_B) P. \label{eq:Z0_def}
 \end{eqnarray}
\end{lemma}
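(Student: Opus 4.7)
The plan is to analyze the linear map $\mathcal{E}_k(X) := P(\ket{k}\bra{k}_A \otimes X)P$ that sends an operator $X$ on $\mH_B^{(k)}$ to an operator on $\mH$. As a preliminary step I compute $\mathcal{E}_0$ explicitly in the ONB $\{\ket{\pi},\ket{\pi^\perp}\}$ of $\mH$: using (\ref{eq:pi}) and the orthogonality $\braket{\eta_0|\nu_0}_B = 0$, one finds $\mathcal{E}_0(I_{\mH_B^{(0)}}) = Z_0 = \eta_0\ket{\pi}\bra{\pi} + \nu_0\ket{\pi^\perp}\bra{\pi^\perp}$; more generally, writing $X$ in the ONB $\{\ket{\eta_0}/\sqrt{\eta_0},\ket{\nu_0}/\sqrt{\nu_0}\}$ of $\mH_B^{(0)}$, the matrix of $\mathcal{E}_0(X)$ has diagonal entries rescaled by $\eta_0,\nu_0$ and its off-diagonal entry rescaled by $\sqrt{\eta_0\nu_0}$. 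In the generic regime $\eta_0,\nu_0>0$ this makes $\mathcal{E}_0$ a linear bijection from operators on $\mH_B^{(0)}$ onto operators on $\mH$, and analogously for $\mathcal{E}_1$.

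For necessity, assume (\ref{eq:Pi_LOCC}) holds. The summands $\mathcal{E}_0(\Phi_m^{(0)})$ and $\mathcal{E}_1(\Phi_m^{(1)})$ are each PSD and sum to the rank-one operator $\Pi_m$; the standard fact that every PSD summand of a rank-one PSD total is a nonnegative multiple of that total gives $\mathcal{E}_0(\Phi_m^{(0)}) = c_m\Pi_m$ for some $c_m \in [0,1]$. Summing over $m$ and using $\sum_m \Phi_m^{(0)} = I_{\mH_B^{(0)}}$ yields $\sum_m c_m\Pi_m = \mathcal{E}_0(I_{\mH_B^{(0)}}) = Z_0$, which is (\ref{eq:sum_cm}).

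For sufficiency, given $\{c_m\}$ satisfying (\ref{eq:sum_cm}), I define $\Phi_m^{(0)} := \mathcal{E}_0^{-1}(c_m\Pi_m)$ and $\Phi_m^{(1)} := \mathcal{E}_1^{-1}((1-c_m)\Pi_m)$. Positivity is the key check: in the $2\times 2$ matrix description above, the rank-one PSD input $c_m\Pi_m$ produces a matrix with nonnegative diagonal entries and vanishing determinant (the asymmetric rescalings cancel symmetrically), so $\Phi_m^{(0)} \ge 0$. Completeness follows from injectivity of $\mathcal{E}_k$: $\sum_m \Phi_m^{(0)} = \mathcal{E}_0^{-1}(Z_0) = I_{\mH_B^{(0)}}$, and since $\sum_m (1-c_m)\Pi_m = P - Z_0 = \eta_1\ket{\pi}\bra{\pi} + \nu_1\ket{\pi^\perp}\bra{\pi^\perp} = \mathcal{E}_1(I_{\mH_B^{(1)}})$ by (\ref{eq:sum_eta_nu}), the analogous identity holds for $\Phi_m^{(1)}$. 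Substituting back recovers $\mathcal{E}_0(\Phi_m^{(0)}) + \mathcal{E}_1(\Phi_m^{(1)}) = \Pi_m$, which is (\ref{eq:Pi_LOCC}).

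The one delicate point is the degenerate regime where $\nu_0 = 0$ (or $\eta_1 = 0$), in which $\mH_B^{(k)}$ collapses to one dimension and $\mathcal{E}_k$ ceases to be bijective. Here, however, $Z_0$ drops to rank one as well, and (\ref{eq:sum_cm}) forces every $c_m\Pi_m$ with $c_m > 0$ to be supported on a single direction of $\mH$; the construction then reduces to its one-dimensional analogue with the vanishing factors set to zero, and I expect no structural change to the argument.
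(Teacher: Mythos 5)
Your proposal is correct and takes essentially the same route as the paper: your map $\mathcal{E}_k(X)=P(\ket{k}\bra{k}_A\otimes X)P$ is exactly conjugation by the paper's operator $S_k$, your inverse $\mathcal{E}_k^{-1}$ is conjugation by the paper's $T_k$, and both the rank-one necessity argument and the sufficiency construction $\Phi_m^{(k)}=c_m^{(k)}T_k\Pi_m T_k^\dagger$ coincide with the paper's. The only divergence is the degenerate case $\nu_0=0$ or $\eta_1=0$, which you sketch correctly but leave incomplete, whereas the paper absorbs it uniformly by defining $T_k$ with pseudo-inverses and replacing invertibility of $\mathcal{E}_k$ with the domination chain $P_k\ge Z_k\ge c_m^{(k)}\Pi_m$, which forces $P_k\Pi_m P_k=\Pi_m$ whenever $c_m^{(k)}>0$.
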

In particular, setting $c_m = c_2$ for any $m \ge 3$ in Lemma~\ref{lemma:LOCC}
gives the following proposition (proof in Appendix~\ref{append:LOCC1}).

\begin{proposition} \label{pro:LOCC1}
 Any measurement with finite rank-one POVM operators on a 2D $(2,N)$-space
 can be realized by one-way LOCC if
 \begin{eqnarray}
  \gamma_1 &\ge& \eta_0 - (1 - \gamma_1) \nu_0. \label{eq:LOCC_case1}
 \end{eqnarray}
\end{proposition}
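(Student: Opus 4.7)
The plan is to invoke Lemma~\ref{lemma:LOCC} and exhibit coefficients $\{c_m\}_{m=1}^M$ satisfying (\ref{eq:sum_cm}). Following the hint in the statement, I would set $c_2 = c_3 = \cdots = c_M$, reducing the problem to two unknowns $c_1$ and $c_2$. Since $\{\Pi_m\}_{m=1}^M$ is a POVM on $\mH$, one has $\sum_{m=1}^M \Pi_m = P$, so
\[ \sum_{m=1}^M c_m \Pi_m \;=\; (c_1 - c_2)\gamma_1 \ket{\pi}\bra{\pi} + c_2 P. \]

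Next I would compute $Z_0$ in the ONB $\{\ket{\pi},\ket{\pi^\perp}\}$ of $\mH$. Applying $\ket{0}\bra{0}_A \otimes I_B$ to the expressions in (\ref{eq:pi}) picks out the $\ket{0}_A$-component on Alice's side, and the orthogonality $\braket{\eta_0 | \nu_0}_B = 0$ -- guaranteed by the Walgate \emph{et al.}\ choice of Alice's ONB -- kills the cross terms when one projects back onto $\mH$. This yields $Z_0 = \eta_0 \ket{\pi}\bra{\pi} + \nu_0 \ket{\pi^\perp}\bra{\pi^\perp}$, which, using $P = \ket{\pi}\bra{\pi} + \ket{\pi^\perp}\bra{\pi^\perp}$, rewrites as $(\eta_0 - \nu_0)\ket{\pi}\bra{\pi} + \nu_0 P$.

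Matching the coefficients of $\ket{\pi}\bra{\pi}$ and of $P$ then uniquely determines $c_2 = \nu_0$ and $c_1 = \nu_0 + (\eta_0 - \nu_0)/\gamma_1$. The bound $c_2 \in [0,1]$ is immediate from (\ref{eq:sum_eta_nu}), while $c_1 \ge 0$ follows from $\eta_0 \ge \nu_0$ combined with $\gamma_1 > 0$. The only remaining inequality is $c_1 \le 1$, which after clearing denominators rearranges exactly to the hypothesis (\ref{eq:LOCC_case1}), so Lemma~\ref{lemma:LOCC} applies and the proof is complete.

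The argument is essentially algebraic once Lemma~\ref{lemma:LOCC} is granted as a black box; the only delicate point is the evaluation of $Z_0$, where one must carefully use the orthogonality between $\ket{\eta_0}_B$ and $\ket{\nu_0}_B$ to drop the correct cross terms and arrive at a diagonal form in the $\{\ket{\pi},\ket{\pi^\perp}\}$ basis. Beyond this bookkeeping I do not anticipate any serious obstacle.
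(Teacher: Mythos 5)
Your proof is correct and follows essentially the same route as the paper: the same ansatz $c_2=\cdots=c_M$, the same (algebraically identical) choices $c_1=\bigl(\eta_0-(1-\gamma_1)\nu_0\bigr)/\gamma_1$ and $c_2=\nu_0$, and the same verification against $Z_0=\eta_0\ket{\pi}\bra{\pi}+\nu_0\ket{\pi^\perp}\bra{\pi^\perp}$. The only cosmetic difference is that you should cite Lemma~\ref{lemma:LOCC_cond} alongside Lemma~\ref{lemma:LOCC} to pass from the existence of Bob's POVMs satisfying (\ref{eq:Pi_LOCC}) to the actual one-way LOCC realization, as the paper does.
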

Note that if $\{ \Pi_m \}$ is a projective measurement, then since $\gamma_1 = 1$ holds,
(\ref{eq:LOCC_case1}) always holds.

As an example, we consider $\{ \Pi_m \}_{m=1}^M$ $~(M \ge 3)$ on
$\mH = \spn(\ket{S}, \ket{T_0})$, where $\Pi_m \neq 0$ (i.e., $\gamma_1 > 0$).
For example, a measurement minimizing the average error probability
for the $M$ quantum states $\{ \alpha_m \ket{S} + \beta_m \ket{T_0} \}_{m=1}^M$
with $|\alpha_m|^2 + |\beta_m|^2 = 1$ can often be written as this form.
$\ket{\pi}$ and $\ket{\pi^\perp}$ can be written as
\begin{eqnarray}
 \ket{\pi} &=& x \ket{S} + y \ket{T_0}, \nonumber \\
 \ket{\pi^\perp} &=& - y^* \ket{S} + x^* \ket{T_0}, \label{eq:pi_pip_ex2}
\end{eqnarray}
with some complex values $x$ and $y$ with $|x|^2 + |y|^2 = 1$, where $^*$ denotes the complex conjugate.
Indeed, we can easily verify that $\{ \ket{\pi}, \ket{\pi^\perp} \}$ is an ONB in $\mH$.
Substituting (\ref{eq:Walgate_ex}) into (\ref{eq:pi_pip_ex2}),
we can represent $\ket{\pi}$ and $\ket{\pi^\perp}$ in the form of (\ref{eq:pi}) as
\begin{eqnarray}
 \ket{\pi} &=& \ket{0}_A \frac{y\ket{0}_B - x\ket{1}_B}{\sqrt{2}}
  + \ket{1}_A \frac{x\ket{0}_B - y\ket{1}_B}{\sqrt{2}}, \nonumber \\
 \ket{\pi^\perp} &=& \ket{0}_A \frac{x^*\ket{0}_B - y^*\ket{1}_B}{\sqrt{2}}
  - \ket{1}_A \frac{y^*\ket{0}_B - x^*\ket{1}_B}{\sqrt{2}}.
  \nonumber \\
 \label{eq:pi_ex2}
\end{eqnarray}
From (\ref{eq:pi_ex2}), $\eta_0 = \nu_0 = 1/2$ holds,
and thus (\ref{eq:LOCC_case1}) always holds regardless of $\gamma_1$, $x$, and $y$.
Therefore, from Proposition~\ref{pro:LOCC1},
$\{ \Pi_m \}$ can be realized by one-way LOCC.

Unfortunately, (\ref{eq:LOCC_case1}) does not always hold.
For example, consider the measurement
$\{ \Pi_m = \ket{\pi_m}\bra{\pi_m} \}_{m=1}^3$ on $\mH = \spn(\ket{S}, \ket{T_+})$
with
\begin{eqnarray}
 \ket{\pi_1} &=& \sqrt{\frac{2}{3}}\ket{T_+}, ~
  \ket{\pi_2} = - \sqrt{\frac{1}{6}}\ket{T_+} + \sqrt{\frac{1}{2}}\ket{S}, \nonumber \\
 \ket{\pi_3} &=& - \sqrt{\frac{1}{6}}\ket{T_+} - \sqrt{\frac{1}{2}}\ket{S},
\end{eqnarray}
where $\ket{T_+} = \ket{+}_A\ket{+}_B$.
After some algebra, we have $\eta_0 = 1$, $\nu_0 = 1/2$, and $\gamma_1 = 2/3$,
and thus (\ref{eq:LOCC_case1}) does not hold.
Actually, in this case, (\ref{eq:LOCC_case1}) can be satisfied by
permuting $\Pi_1$ and $\Pi_2$.
However, there exist 2D measurements $\{ \Pi_m \}_{m=1}^M$ such that (\ref{eq:LOCC_case1}) does not hold
for any permutation of the POVM operators.

\subsection{Complete proof of Theorem~\ref{thm:LOCC2}}

From Proposition~\ref{pro:LOCC1},
all we have to do now to prove Theorem~\ref{thm:LOCC2} is to show that
a measurement $\{ \Pi_m \}$ can be realized by one-way LOCC
when (\ref{eq:LOCC_case1}) does not hold.
We here consider making Alice's subsystem interact properly with her auxiliary system.
Let $\mH_S$ be Alice's 2D auxiliary system and $\{ \ket{s_0}, \ket{s_1} \}$ be an ONB in $\mH_S$.
Also, let 
\begin{eqnarray}
 \mL(A) &=& U (\ket{s_0}\bra{s_0} \otimes A) U^\dagger, \nonumber \\
 U &=& U_{SA} \otimes I_B, \label{eq:mL}
\end{eqnarray}
with an operator $A$ on $\mH$,
where $U_{SA}$ is a unitary operator on $\mH_S \otimes \mH_A$.
Also, let $\PE = \mL(P)$, $\rhoE_l = \mL(\rho_l)$, and $\mHE = \spn(\{ \rhoE_l \}_{m=1}^L)$.
We can easily see that $\PE$ is the orthogonal projection operator onto $\mHE$.

We consider the following one-way LOCC measurement:
Alice prepares the auxiliary system in a state $\ket{s_0}$
and transforms $\rho_l$ into $\rhoE_l = \mL(\rho_l)$ using $U_{SA}$.
Then, Alice and Bob perform a measurement $\{ \PiE_m \}_{m=1}^M$,
where $\PiE_m = \mL(\Pi_m)$.
Since $\{ \Pi_m \}$ is on $\mH$, it follows that $\{ \PiE_m \}_{m=1}^M$ is a 2D measurement on $\mHE$.
% and regard its result as the outcome of the measurement $\{ \Pi_m \}$.
From (\ref{eq:mL}), for any $l$ with $1 \le l \le L$ and $m$ with $1 \le m \le M$, we have
\begin{eqnarray}
 \Tr(\rhoE_l\PiE_m) &=& \Tr((\ket{s_0}\bra{s_0} \otimes \rho_l) (\ket{s_0}\bra{s_0} \otimes \Pi_m)) \nonumber \\
 &=& \Tr(\rho_l\Pi_m),
\end{eqnarray}
which means that the measurement $\{ \PiE_m \}$ for $\{ \rhoE_l \}$ is intrinsically equivalent to
the measurement $\{ \Pi_m \}$ for $\{ \rho_l \}$.
Thus, to show that $\{ \Pi_m \}$ can be realized by one-way LOCC,
it suffices to find $U_{SA}$ such that $\{ \PiE_m \}$ can be
realized by one-way LOCC.
Note that since $\PiE_m = \mL(\Pi_m)$, for any $m$ with $1 \le m \le M$, $\Pi_m$ can be expressed by
\begin{eqnarray}
 \Pi_m &=& \braket{s_0 | U^\dagger \PiE_m U | s_0}.
\end{eqnarray}

We consider the case in which there exist
measurements $\{ \PhiE{0}_m \}_{m=1}^M$ and $\{ \PhiE{1}_m \}_{m=1}^M$ such that,
for any $m$ with $1 \le m \le M$, $\PiE_m$ is expressed by
\begin{eqnarray}
 \PiE_m &=&
  \PE \left( \ket{s_0}\bra{s_0} \otimes \PhiE{0}_m + \ket{s_1}\bra{s_1} \otimes \PhiE{1}_m \right) \PE.
  \label{eq:Pi_noproj_LOCC}
\end{eqnarray}
The following lemma states that $\{ \Pi_m \}$ can be realized by one-way LOCC
if $\{ \PhiE{0}_m \}$ and $\{ \PhiE{1}_m \}$ can be realized by one-way LOCC.

\begin{lemma} \label{lemma:LOCC_cond_noproj}
 Any measurement $\{ \Pi_m \}_{m=1}^M$ with rank-one POVM operators on a 2D $(2,N)$-space
 can be realized by one-way LOCC
 if a unitary operator $U_{SA}$ on $\mH_S \otimes \mH_A$ exists
 such that there exist measurements $\{ \PhiE{0}_m \}_{m=1}^M$ and $\{ \PhiE{1}_m \}_{m=1}^M$
 that can be realized by one-way LOCC and satisfy (\ref{eq:Pi_noproj_LOCC}),
 where $\PiE_m = \mL(\Pi_m)$.
\end{lemma}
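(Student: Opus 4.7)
The plan is to mimic Lemma~\ref{lemma:LOCC_cond}, with the added twist that Alice performs a local pre-processing step using her ancilla $\mH_S$ and the unitary $U_{SA}$ before the main measurement.

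First I would write down the candidate one-way LOCC protocol on $\mH_A \otimes \mH_B$. Alice locally adjoins her ancilla prepared in $\ket{s_0}$, applies $U_{SA}$ on $\mH_S \otimes \mH_A$, and measures $\mH_S$ in the ONB $\{ \ket{s_0}, \ket{s_1} \}$ to obtain $k \in \{ 0, 1 \}$. Alice and Bob then execute $\{ \PhiE{k}_m \}_{m=1}^M$, which is realizable by one-way LOCC between $\mH_A$ and $\mH_B$ by hypothesis, and record Bob's outcome $m$. All of Alice's pre-processing is local, and the subsequent stage is already one-way LOCC in the Alice-to-Bob direction, so the composite protocol is a bona fide one-way LOCC measurement on $\mH_A \otimes \mH_B$.

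Next I would compute the effective POVM element $E_m$ on $\mH_A \otimes \mH_B$ that this protocol implements. Setting $\Omega_m = \ket{s_0}\bra{s_0} \otimes \PhiE{0}_m + \ket{s_1}\bra{s_1} \otimes \PhiE{1}_m$ for the joint POVM on $\mH_S \otimes \mH_A \otimes \mH_B$ realized by the ancilla measurement followed by $\{ \PhiE{k}_m \}$, one obtains $E_m = \bra{s_0} U^\dagger \Omega_m U \ket{s_0}$, with the partial bracket taken on $\mH_S$. To match the definition of realization by one-way LOCC it then remains to establish the operator identity $\Pi_m = P E_m P$ on $\mH$.

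I would verify this at the level of traces. For any operator $\sigma$ supported on $\mH$, a direct expansion gives $\Tr(\sigma P E_m P) = \Tr(\mL(\sigma) \Omega_m)$; since $\mL(\sigma)$ is supported on $\mHE$, inserting $\PE$ freely and then invoking the hypothesis (\ref{eq:Pi_noproj_LOCC}) yields $\Tr(\mL(\sigma) \Omega_m) = \Tr(\mL(\sigma) \PE \Omega_m \PE) = \Tr(\mL(\sigma) \PiE_m)$. The computation of $\Tr(\rhoE_l \PiE_m) = \Tr(\rho_l \Pi_m)$ carried out earlier in the excerpt goes through verbatim with $\sigma$ in place of $\rho_l$ and closes the chain, forcing $P E_m P = \Pi_m$ on $\mH$. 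The only genuine obstacle is the Hilbert-space bookkeeping among $\mH_S$, $\mH_A$, and $\mH_B$, and the careful absorption of $\PE$ between $\mL(\sigma)$ and $\Omega_m$; conceptually the lemma is simply a pre-processing wrapper around Lemma~\ref{lemma:LOCC_cond}.
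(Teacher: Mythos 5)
Your proposal is correct and follows essentially the same route as the paper: the same protocol (adjoin the ancilla in $\ket{s_0}$, apply $U_{SA}$, measure $\mH_S$ in $\{\ket{s_0},\ket{s_1}\}$, then run $\{\PhiE{k}_m\}$), with the identity $\PiE_m = \PE\,\OmegaE_m\,\PE$ from (\ref{eq:Pi_noproj_LOCC}) doing the work. The only difference is cosmetic: the paper argues on the extended space $\mH_S\otimes\mH_A\otimes\mH_B$ and invokes the previously established equivalence $\Tr(\rhoE_l\PiE_m)=\Tr(\rho_l\Pi_m)$, whereas you pull the effective POVM back to $\mH_A\otimes\mH_B$ and verify $P E_m P = \Pi_m$ directly, which if anything matches the paper's formal definition of ``realized by one-way LOCC'' a bit more explicitly.
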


\begin{proof}
 As described above, if $\{ \PiE_m \}$ can be realized by one-way LOCC,
 then $\{ \Pi_m \}$ can also be realized by one-way LOCC.
 We consider the following one-way LOCC measurement for $\{ \rhoE_l \}$ (denoted by $\{ \OmegaE_m \}_{m=1}^M$):
 Alice first performs a measurement on $\mH_S$ in the ONB $\{ \ket{s_0}, \ket{s_1} \}$.
 Let $k \in \{ 0, 1 \}$ be its result.
 Alice and Bob then perform a measurement $\{ \PhiE{k}_m \}$
 and regard its result as the result of $\{ \OmegaE_m \}$.
 $\OmegaE_m$ is obviously expressed by
 \begin{eqnarray}
  \OmegaE_m &=& \ket{s_0}\bra{s_0} \otimes \PhiE{0}_m + \ket{s_1}\bra{s_1} \otimes \PhiE{1}_m.
 \end{eqnarray}
 From (\ref{eq:Pi_noproj_LOCC}), $\PiE_m = \PE \OmegaE_m \PE$ holds for any $m$ with $1 \le m \le M$,
 which means that $\{ \PiE_m \}$ can be realized by one-way LOCC.
 \QED
\end{proof}

Using Lemma~\ref{lemma:LOCC_cond_noproj}, we can show the following proposition.

\begin{proposition} \label{pro:LOCC2}
 Any measurement with finite rank-one POVM operators on a 2D $(2,N)$-space
 can be realized by one-way LOCC if
 \begin{eqnarray}
  \gamma_1 &<& \eta_0 - (1 - \gamma_1) \nu_0. \label{eq:LOCC_case2}
 \end{eqnarray}
\end{proposition}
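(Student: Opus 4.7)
The plan is to invoke Lemma~\ref{lemma:LOCC_cond_noproj}, so the task reduces to exhibiting a unitary $U_{SA}$ on $\mH_S\otimes\mH_A$ and two POVMs $\{\PhiE{0}_m\}_{m=1}^M$, $\{\PhiE{1}_m\}_{m=1}^M$ on $\mH_A\otimes\mH_B$ satisfying the dilation identity (\ref{eq:Pi_noproj_LOCC}) and each realizable by one-way LOCC. The natural way to guarantee the latter is to arrange that, on the (at most) 2D subspace $\mH^{(k)}$ obtained by projecting $\mHE$ onto $\ket{s_k}$, each branch measurement $\{\PhiE{k}_m\}$ already falls into the regime of Proposition~\ref{pro:LOCC1}, i.e., its own parameters $\tilde{\gamma}_1^{(k)},\tilde{\eta}_0^{(k)},\tilde{\nu}_0^{(k)}$ obey the analogue of (\ref{eq:LOCC_case1}). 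In short, I want to use $U_{SA}$ to split the single ``bad'' 2D measurement into two ``good'' 2D measurements, each handled by Section~\ref{subsec:simple}.

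For the explicit construction, the obstruction in (\ref{eq:LOCC_case2}) is that $\gamma_1$ is too small relative to the weight $\eta_0$ that $\ket{\pi}$ places on $\ket{0}_A$. I would therefore take $U_{SA}$ to act nontrivially in the 2D subspace $\spn(\ket{s_0}\ket{0}_A,\ket{s_1}\ket{0}_A)$ as a rotation by an angle $\theta$, leaving the orthogonal complement fixed. This transfers part of the $\ket{0}_A$-component of $\ket{\pi}$ to the $\ket{s_1}$ branch, reducing $\tilde{\eta}_0^{(0)}$ on the $k=0$ branch while creating a rank-one operator on the $k=1$ branch with nontrivial weight $\tilde{\gamma}_1^{(1)}$. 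Using (\ref{eq:pi}) and (\ref{eq:eta_nu}) one obtains explicit expressions for $\tilde{\gamma}_1^{(k)}(\theta)$, $\tilde{\eta}_0^{(k)}(\theta)$, $\tilde{\nu}_0^{(k)}(\theta)$; the two branch inequalities $\tilde{\gamma}_1^{(k)}\geq\tilde{\eta}_0^{(k)}-(1-\tilde{\gamma}_1^{(k)})\tilde{\nu}_0^{(k)}$ interpolate between the original failing configuration at $\theta=0$ and a swapped configuration at $\theta=\pi/2$, and an intermediate-value argument should supply a $\theta^\star$ at which both inequalities hold simultaneously. With such $\theta^\star$ fixed, I would set $\PhiE{k}_1:=\tilde{\gamma}_1^{(k)}\ket{\tilde{\pi}^{(k)}}\bra{\tilde{\pi}^{(k)}}$ and build the remaining $\PhiE{k}_m$ by a consistent Naimark-style splitting of each $\Pi_m$ between the two branches so that (\ref{eq:Pi_noproj_LOCC}) holds; Proposition~\ref{pro:LOCC1} applied to each branch then closes the argument.

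The hard part will be proving that a single $\theta$ making both branch inequalities hold simultaneously exists: this is genuinely a two-variable feasibility problem, and it requires exploiting the strictness of (\ref{eq:LOCC_case2}) together with the normalizations $\eta_0+\eta_1=\nu_0+\nu_1=1$ and the standing assumption $\eta_0\geq\nu_0$. A secondary technical difficulty is to define the $\PhiE{k}_m$ for $m\geq 2$ so that they are positive, sum to the identity on $\mH^{(k)}$, and respect (\ref{eq:Pi_noproj_LOCC}) on the nose; I expect this part to be routine once $\theta^\star$ is pinned down, because the residual operators are of lower rank and the splitting is essentially linear in the $c_m$-type coefficients already appearing in Lemma~\ref{lemma:LOCC}.
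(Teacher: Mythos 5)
Your overall framework is the right one---invoking Lemma~\ref{lemma:LOCC_cond_noproj} and introducing an auxiliary qubit rotated through an angle $\theta$ is exactly how the paper proceeds---but the step you yourself flag as ``the hard part'' is a genuine gap, and the secondary step you call ``routine'' is not. First, the existence of the decomposition (\ref{eq:Pi_noproj_LOCC}) is, by Lemma~\ref{lemma:LOCC} applied to the extended system (with $\mH_S$ playing the role of Alice's space and $\mH_A\otimes\mH_B$ that of Bob's), \emph{equivalent} to finding coefficients $\cE_m\in[0,1]$ with $\sum_m\cE_m\PiE_m=\ZE_0$; this is a nontrivial constraint that ties the admissible splittings to $\theta$ and, under hypothesis (\ref{eq:LOCC_case2}), rules out the fully ``swapped'' configuration at $\theta=\pi/2$ that anchors one end of your intermediate-value argument. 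Second, even granting a feasible splitting, you supply no computation of the branch parameters $\tilde{\gamma}_1^{(k)},\tilde{\eta}_0^{(k)},\tilde{\nu}_0^{(k)}$ and no proof that the surviving genuinely bipartite branch ever satisfies its own copy of (\ref{eq:LOCC_case1}): at $\theta=0$ that branch is the original failing measurement, at the largest feasible $\theta$ it is (up to relabeling) the original measurement with one element deleted, and in between it is a deformation with all $M$ elements still nonzero. Since the paper explicitly notes that there exist 2D measurements violating (\ref{eq:LOCC_case1}) under \emph{every} permutation of the POVM operators, the hope that a single $\theta^\star$ drops both branches into the Proposition~\ref{pro:LOCC1} regime cannot be taken for granted, and your proposal stands or falls on exactly this unproven claim.

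The idea you are missing is that one should not try to make the surviving branch \emph{good} but to make it \emph{smaller}. The paper chooses $U_{SA}$ so that one of the two $\mH_S$ outcomes collapses Alice's system onto a single vector of $\mH_A$ (so that branch is finished by Bob alone), and pushes the rotation to the extreme feasible angle $\sin^2\theta=\gamma_1/(\eta_0-(1-\gamma_1)\nu_0)$, which corresponds to taking $\cE_1=1$ and hence forces $\PhiE{1}_1=0$: the entire first POVM element is absorbed into the Bob-only branch. The remaining branch is then a 2D $(2,N)$-space measurement with at most $M-1$ nonzero rank-one elements, and the whole procedure (testing (\ref{eq:LOCC_case1}) and applying Proposition~\ref{pro:LOCC1} or repeating the auxiliary-system step) is applied recursively; the recursion terminates because any 2D measurement with fewer than three nonzero POVM operators is realizable by one-way LOCC by the result of Walgate \emph{et al.} Without this induction on the number of outcomes---or an actual proof of your simultaneous two-branch feasibility claim---the argument does not close.
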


\begin{proof}
 Let $\{ \Pi_m \}_{m=1}^M$ be a measurement with rank-one POVM operators on a 2D $(2,N)$-space $\mH$.
 Assume that $\{ \Pi_m \}$ satisfies (\ref{eq:LOCC_case2}).
 From Lemma~\ref{lemma:LOCC_cond_noproj}, it suffices to show that
 a unitary operator $U_{SA}$ on $\mH_S \otimes \mH_A$ exists
 such that there exist measurements $\{ \PhiE{0}_m \}_{m=1}^M$ and $\{ \PhiE{1}_m \}_{m=1}^M$
 that can be realized by one-way LOCC and satisfy (\ref{eq:Pi_noproj_LOCC}).

 First, we show a unitary operator $U_{SA}$ and
 measurements $\{ \PhiE{0}_m \}_{m=1}^M$ and $\{ \PhiE{1}_m \}_{m=1}^M$
 that satisfy (\ref{eq:Pi_noproj_LOCC}).
 Also, we show $\PhiE{1}_1 = 0$.
 We choose $U_{SA}$ such that
 \begin{eqnarray}
  U_{SA} \ket{s_0}\ket{0}_A &=& (\sin\theta \ket{s_0} + \cos\theta \ket{s_1}) \ket{0}_A, \nonumber \\
  U_{SA} \ket{s_0}\ket{1}_A &=& \ket{s_1}\ket{1}_A \label{eq:U_SA}
 \end{eqnarray}
 for some real number $\theta$.
 Such $U_{SA}$ is not uniquely determined; we can choose any $U_{SA}$ satisfying (\ref{eq:U_SA}).
 Let
 \begin{eqnarray}
  \ZE_0 &=& \PE (\ket{s_0}\bra{s_0} \otimes I_{AB}) \PE, \label{eq:ZE0}
 \end{eqnarray}
 where $I_{AB}$ is the identity operator on $\mH_A \otimes \mH_B$.
 Using Lemma~\ref{lemma:LOCC} with replacing $\mH_A$ by $\mH_S$,
 $\mH_B$ by $\mH_A \otimes \mH_B$, and $\ket{0}_A$ by $\ket{s_0}$,
 we find that if $\{ \cE_m \}_{m=1}^M$ exists such that
 \begin{eqnarray}
  && 0 \le \cE_m \le 1, ~~~ 1 \le m \le M, \nonumber \\
  && \sum_{m=1}^M \cE_m \PiE_m = \ZE_0, \label{eq:sum_cm_ex}
 \end{eqnarray}
 then there exist POVMs $\{ \PhiE{0}_m \}$ and $\{ \PhiE{1}_m \}$ such that (\ref{eq:Pi_noproj_LOCC}) holds.
 We can show that there exists $\{ \cE_m \}$ such that (\ref{eq:sum_cm_ex}) and $\PhiE{1}_1 = 0$ hold
 if
 \begin{eqnarray}
  \sin^2 \theta &=& \frac{\gamma_1}{\eta_0 - (1 - \gamma_1)\nu_0} \label{eq:sin2alpha}
 \end{eqnarray}
 holds (see Appendix~\ref{append:alpha}).
 Note that from (\ref{eq:LOCC_case2}),
 the right-hand side of (\ref{eq:sin2alpha}) does not exeed 1, and thus there exists $\theta$
 satisfying (\ref{eq:sin2alpha}).

 Next, we show that such measurements $\{ \PhiE{0}_m \}$ and $\{ \PhiE{1}_m \}$ can be realized by one-way LOCC.
 Let $k$ be the outcome of the measurement in the ONB $\{ \ket{s_0}, \ket{s_1} \}$.
 If $k = 0$, then, from (\ref{eq:U_SA}), the state of $\mH_A$ is always projected onto $\ket{0}_A$,
 which indicates that $\PhiE{0}_m$ can be written in the form
 \begin{eqnarray}
  \PhiE{0}_m = \ket{0}\bra{0}_A \otimes \Psi_m, \label{eq:Psim}
 \end{eqnarray}
 where $\{ \Psi_m \}_{m=1}^M$ is a POVM on Bob's side of the system.
 Thus, in this case, Bob may simply perform the measurement $\{ \Psi_m \}$.
 If $k = 1$, then Alice and Bob have to perform the 2D measurement $\{ \PhiE{1}_m \}_{m=1}^M$.
 Since $\PhiE{1}_1 = 0$ holds,
 $\{ \PhiE{1}_m \}$ has less than $M$ non-zero POVM operators.
 Therefore, the problem of realizing $\{ \Pi_m \}$ with $M$ POVM operators by one-way LOCC
 is reduced to the problem of realizing $\{ \PhiE{1}_1 \}$ with $M'$ POVM operators
 by one-way LOCC, where $M' < M$.
 Therefore, by iteratively performing the procedure stated in this paper,
 $\{ \PhiE{1}_m \}$ can be realized by one-way LOCC,
 since any 2D measurement with less than three non-zero POVM operators can obviously
 be realized by one-way LOCC \cite{Wal-Sho-Har-Ved-2000}.
 \QED
\end{proof}

\begin{proofa}
 Obvious from Propositions~\ref{pro:LOCC1} and \ref{pro:LOCC2}.
 \QED
\end{proofa}

\subsection{Schematic diagram for realizing 2D measurement}

\begin{figure}[bt]
 \centering
 \includegraphics[scale=0.8]{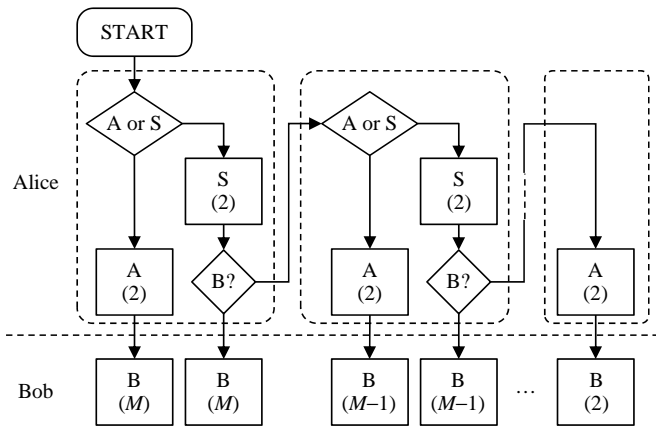}
 \caption{\label{fig:LOCC}A schematic diagram for realizing a measurement
 with finite rank-one POVM operators on a 2D $(2,N)$-space $\mH$ by one-way LOCC.
 Diamonds represent decisions.
 Rectangles represent measurements on $\mH_S$, $\mH_A$, or $\mH_B^{(k)}$ $~(k \in \{ 0, 1 \})$.
 Each measurement on $\mH_S$ is performed after 
 Alice's state interacts with her auxiliary system.
 Values in the brackets show the number of measurement outcomes.
 }
\end{figure}

A schematic diagram of our measurement process
in the case of a measurement with finite rank-one POVM operators on a 2D $(2,N)$-space
is sketched in Fig.~\ref{fig:LOCC}.
Alice first determines whether she performs a binary measurement on $\mH_A$ or
makes her system interact with her auxiliary system $\mH_S$ followed by
performing a binary measurement on $\mH_S$.
The decision rule is given by (\ref{eq:LOCC_case1}).
Then, in the former case, Alice tells the result $k$ to Bob,
and he performs a measurement on $\mH_B^{(k)}$.
In the latter case, whether Alice or Bob performs
a measurement is determined by the result of Alice's measurement in the ONB
$\{ \ket{s_0}, \ket{s_1} \}$.
Alice repeats the above sequence the necessary number of times.
This procedure stops after a finite number of steps.
Bob may perform a measurement only once at an appropriate time.

The entire algorithm for realizing such a measurement is found in the following pseudocode:
\begin{algorithmic}[1]
 \STATE {\bfseries Input:} a quantum state $\rho_l$ and a POVM $\{ \Pi_m \}_{m=1}^M$ with finite rank-one POVM operators on a 2D $(2,N)$-space.
 \REPEAT
 \STATE Compute $\gamma_1$, $\nu_0$, and $\eta_0$ from (\ref{eq:Pi1}) and (\ref{eq:eta_nu}).
 \STATE Compute $\ket{0}_A$ and $\ket{1}_A$ such that (\ref{eq:pi}) holds.
 \IF{(\ref{eq:LOCC_case1}) holds}
 \STATE Alice performs a measurement in the ONB $\{ \ket{0}_A, \ket{1}_A \}$ and reports her result $k \in \{ 0, 1 \}$ to Bob.
 \STATE Bob performs a measurement $\{ \Phi_m^{(k)} \}_{m=1}^M$ ($\Phi_m^{(k)}$ is obtained from (\ref{eq:Phi})).
 \ELSE
 \STATE Compute $U_{SA}$ such that (\ref{eq:U_SA}) holds ($\theta$ is obtained from (\ref{eq:sin2alpha})).
 \STATE Alice prepares the auxiliary system in a state $\ket{s_0}$ and transforms $\rho_l$ into $\rhoE_l = \mL(\rho_l)$.
 \STATE Alice performs a measurement in the ONB $\{ \ket{s_0}, \ket{s_1} \}$ (denote its result as $k$).
 \IF{$k = 0$}
 \STATE Bob performs a measurement $\{ \Psi_m \}_{m=1}^M$ satisfying (\ref{eq:Psim}).
 \ELSE
 \STATE Regard $\rhoE_l$ and $\{ \PhiE{1}_m \}_{m=1}^M$ as $\rho_l$ and $\{ \Pi_m \}_{m=1}^M$, respectively.
 \ENDIF
 \ENDIF
 \UNTIL{Bob performs a measurement.}
 \STATE {\bfseries Output:} the outcome of Bob's measurement.
\end{algorithmic}

\section{Conclusion} \label{sec:conclusion}

In conclusion, we have proved that any 2D measurement in finite-dimensional multipartite systems
can be realized by one-way LOCC.
This implies that multipartite quantum states on a 2D Hilbert space can always be optimally
distinguished by one-way LOCC no matter which optimality criterion is applied.
This also means that in a 2D case,
any entangled information of quantum states obtained by a global measurement
can also be obtained only by one-way LOCC, at least in finite-dimensional systems.

\appendix

\section{Proof of Lemma~\ref{lemma:LOCC}} \label{append:LOCC}

\subsection{Preparations}

 First, we define some operators.
 Let
 \begin{eqnarray}
  S_k &=& \ket{\pi}\bra{\eta_k}_B + \ket{\pi^\perp}\bra{\nu_k}_B, ~~~ k \in \{ 0, 1 \}, \label{eq:Sk} \\
  T_k &=& \eta_k^- \ket{\eta_k}_B\bra{\pi} + \nu_k^- \ket{\nu_k}_B\bra{\pi^\perp}, ~~ k \in \{ 0, 1 \}, \label{eq:Tk}
 \end{eqnarray}
 where $x^-$ is defined as $x^{-1}$ if $x \neq 0$ and zero otherwise.
 $S_k$ and $T_k$ are operators from $\mH_B^{(k)}$ to $\mH$ and from $\mH$ to $\mH_B^{(k)}$, respectively.
 Let $P_k = S_k T_k$; then, from (\ref{eq:Sk}) and (\ref{eq:Tk}), for any $k \in \{ 0, 1 \}$, we have
 \begin{eqnarray}
  P_k &=& \eta_k \eta_k^- \ket{\pi}\bra{\pi} + \nu_k \nu_k^- \ket{\pi^\perp}\bra{\pi^\perp}. \label{eq:Pk}
 \end{eqnarray}
 Since $\eta_k \eta_k^-$ and $\nu_k \nu_k^-$ are 0 or 1, $P_k$ is the orthogonal projection operator
 onto $\spn(\eta_k \ket{\pi}, \nu_k \ket{\pi^\perp})$
 (note that if $\eta_k \neq 0$ and $\nu_k \neq 0$, then $P_k = P$).
 Also, for any $k \in \{ 0, 1 \}$, we have
 \begin{eqnarray}
  T_k S_k &=& \eta_k^- \ket{\eta_k}\bra{\eta_k}_B + \nu_k^- \ket{\nu_k}\bra{\nu_k}_B ~=~ I_B^{(k)},
   \label{eq:TkSk}
 \end{eqnarray}
 where the second equality follows since $\ket{\eta_k}$ and $\ket{\nu_k}$ are orthogonal vectors
 of $\mH_B^{(k)}$, and (\ref{eq:eta_nu}) holds.
 Moreover, for any operator $X$ on $\mH_B$ and $k \in \{ 0, 1 \}$, we have
 \begin{eqnarray}
  P (\ket{k}\bra{k}_A \otimes X) P &=& S_k X S_k^\dagger. \label{eq:PkXP}
 \end{eqnarray}
 Indeed, from $P = \ket{\pi}\bra{\pi} + \ket{\pi^\perp}\bra{\pi^\perp}$,
 we have
 \begin{eqnarray}
  P (\ket{k}_A \otimes I_B) &=& (\ket{\pi}\bra{\pi} + \ket{\pi^\perp}\bra{\pi^\perp}) (\ket{k}_A \otimes I_B)
   \nonumber \\
  &=& S_k,
 \end{eqnarray}
 where the second line follows from (\ref{eq:pi}).
 Thus, since $\ket{k}\bra{k}_A \otimes X = (\ket{k}_A \otimes I_B) X (\bra{k}_A \otimes I_B)$,
 (\ref{eq:PkXP}) holds.

 We also define
 \begin{eqnarray}
  Z_k &=& P (\ket{k}\bra{k}_A \otimes I_B) P, ~~~ k \in \{ 0, 1 \}, \label{eq:Zk_def}
 \end{eqnarray}
 which includes the definition of $Z_0$ in (\ref{eq:Z0_def}).
 We can easily obtain $S_k I_B = S_k$ from (\ref{eq:Sk});
 thus, from (\ref{eq:PkXP}), we have
 \begin{eqnarray}
  Z_k &=& S_k I_B S_k^\dagger ~=~ S_k S_k^\dagger. \label{eq:ZkPPP}
 \end{eqnarray}
 Substituting (\ref{eq:Sk}) into (\ref{eq:ZkPPP}) yields
 \begin{eqnarray}
  Z_k &=& \eta_k \ket{\pi}\bra{\pi} + \nu_k \ket{\pi^\perp}\bra{\pi^\perp}. \label{eq:Zk}
 \end{eqnarray}

\subsection{Necessity}

 Here, we prove the necessity.
 Since $\Pi_m$ is a rank-one operator, to satisfy (\ref{eq:Pi_LOCC}),
 there must exist $\{ c_m \}_{m=1}^M$ with $0 \le c_m \le 1$ such that for any $m$ with $1 \le m \le M$,
 \begin{eqnarray}
  P \left( \ket{0}\bra{0}_A \otimes \Phi_m^{(0)} \right) P &=& c_m \Pi_m, \nonumber \\
  P \left( \ket{1}\bra{1}_A \otimes \Phi_m^{(1)} \right) P &=& (1 - c_m) \Pi_m.
   \label{eq:Pi_LOCC_decomp}
 \end{eqnarray}
 In contrast, since $\{ \Phi_m^{(0)} \}$ is a POVM on $\mH_B^{(0)}$, $\sum_{m=1}^M \Phi_m^{(0)} = I_B^{(0)}$ holds,
 where $I_B^{(k)}$ is the identity operator on $\mH_B^{(k)}$.
 Thus, from (\ref{eq:Pi_LOCC_decomp}), we have
 \begin{eqnarray}
  \sum_{m=1}^M c_m \Pi_m &=& P \left( \ket{0}\bra{0}_A \otimes \sum_{m=1}^M \Phi_m^{(0)} \right) P \nonumber \\
  &=& P ( \ket{0}\bra{0}_A \otimes I_B^{(0)} ) P \nonumber \\
  &=& S_0 I_B^{(0)} S_0^\dagger \nonumber \\
  &=& S_0 S_0^\dagger \nonumber \\
  &=& Z_0,
 \end{eqnarray}
 where the third and fifth lines follow from (\ref{eq:PkXP}) and (\ref{eq:ZkPPP}), respectively.
 Therefore, $\{ c_m \}$ satisfies (\ref{eq:sum_cm}).

\subsection{Sufficiency}

 Here, we prove the sufficiency.
 Assume that there exists $\{ c_m \}_{m=1}^M$ satisfying (\ref{eq:sum_cm}).
 It is sufficient to show that POVMs $\{ \Phi_m^{(0)} \}$ and $\{ \Phi_m^{(1)} \}$ exist
 such that (\ref{eq:Pi_LOCC_decomp}) holds.
 Indeed, in this case (\ref{eq:Pi_LOCC}) is obtained from
 the sum of the first and second lines of (\ref{eq:Pi_LOCC_decomp}).
 Let
 \begin{eqnarray}
  \Phi_m^{(k)} &=& c_m^{(k)} T_k \Pi_m T_k^\dagger, ~~~ 1 \le m \le M, ~ k \in \{ 0, 1 \}, \nonumber \\
  c_m^{(0)} &=& c_m, ~~~~~~~~~~~~~~~ 1 \le m \le M, \nonumber \\
  c_m^{(1)} &=& 1 - c_m, ~~~~~~~~~~ 1 \le m \le M. \label{eq:Phi}
 \end{eqnarray}
 $\Phi_m^{(k)}$ is obviously a positive semidefinite operator on $\mH_B^{(k)}$.
 We show that $\{ \Phi_m^{(0)} \}$ and $\{ \Phi_m^{(1)} \}$ are POVMs satisfying
 (\ref{eq:Pi_LOCC_decomp}).
 Since $Z_0 + Z_1 = P$ holds from (\ref{eq:sum_eta_nu}) and (\ref{eq:Zk}),
 $\sum_{m=1}^M (1 - c_m)\Pi_m = P - Z_0 = Z_1$ holds from (\ref{eq:sum_cm}),
 which gives
 \begin{eqnarray}
  \sum_{m=1}^M c_m^{(k)}\Pi_m = Z_k, ~~~ k \in \{ 0, 1 \}. \label{eq:Z01}
 \end{eqnarray}
 Thus, from (\ref{eq:Phi}), for any $k \in \{ 0, 1 \}$, we have
 \begin{eqnarray}
  \sum_{m=1}^M \Phi_m^{(k)} &=& T_k Z_k T_k^\dagger 
   ~=~ T_k S_k S_k^\dagger T_k^\dagger
   ~=~ I_B^{(k)},
 \end{eqnarray}
 where the second and third equalities follow from (\ref{eq:ZkPPP}) and (\ref{eq:TkSk}), respectively.
 Therefore, $\{ \Phi_m^{(0)} \}$ and $\{ \Phi_m^{(1)} \}$ are POVMs.
 From (\ref{eq:PkXP}) and (\ref{eq:Phi}), for any $k \in \{ 0, 1 \}$, we have
 \begin{eqnarray}
  P \left( \ket{k}\bra{k}_A \otimes \Phi_m^{(k)} \right) P &=&
   S_k \Phi_m^{(k)} S_k^\dagger \nonumber \\
  &=& c_m^{(k)} S_k T_k \Pi_m T_k^\dagger S_k^\dagger \nonumber \\
  &=& c_m^{(k)} P_k \Pi_m P_k.
 \end{eqnarray}
 Thus, to prove (\ref{eq:Pi_LOCC_decomp}), it suffices to show $c_m^{(k)} P_k \Pi_m P_k = c_m^{(k)} \Pi_m$.
 Since $P_k \ge Z_k$ holds from (\ref{eq:Pk}) and (\ref{eq:Zk})
 ($A \ge B$ denotes that $A - B$ is positive semi-definite), we have
 \begin{eqnarray}
  P_k \ge Z_k \ge c_m^{(k)} \Pi_m, ~~~ k \in \{ 0, 1 \},
 \end{eqnarray}
 where the second inequality follows from (\ref{eq:Z01}).
 Thus, since $P_k$ is the orthogonal projection operator, $c_m^{(k)} P_k \Pi_m P_k = c_m^{(k)} \Pi_m$ holds.
 Therefore, (\ref{eq:Pi_LOCC_decomp}) holds.
 \QED

\section{Proof of Proposition~\ref{pro:LOCC1}} \label{append:LOCC1}

 Let $\{ \Pi_m \}_{m=1}^M$ be a measurement with rank-one POVM operators on a 2D $(2,N)$-space $\mH$.
 Assume that $\{ \Pi_m \}$ satisfies (\ref{eq:LOCC_case1}).
 From Lemmas~\ref{lemma:LOCC_cond} and \ref{lemma:LOCC},
 it suffices to show that there exists $\{ c_m \}_{m=1}^M$ satisfying (\ref{eq:sum_cm}).
 Let
 \begin{eqnarray}
  c_1 &=& \frac{\eta_0 - (1 - \gamma_1)\nu_0}{\gamma_1}, \nonumber \\
  c_m &=& \nu_0, ~~~ m \in \{ 2, 3, \cdots, M \}. \label{eq:c12}
 \end{eqnarray}
 We can see that $0 \le c_m \le 1$ for any $m$ with $1 \le m \le M$.
 Indeed, since $0 \le \nu_0 \le 1$, $0 \le c_m \le 1$ holds for any $m \ge 2$.
 Since $\eta_0 \ge \nu_0 \ge (1 - \gamma_1)\nu_0$, which follows from $\gamma_1 > 0$, $c_1 \ge 0$ holds.
 Moreover, $c_1 \le 1$ holds from (\ref{eq:LOCC_case1}).
 From (\ref{eq:c12}), we obtain
 \begin{eqnarray}
  \sum_{m=1}^M c_m \Pi_m &=& c_1 \Pi_1 + \nu_0 (P - \Pi_1) \nonumber \\
  &=& (c_1 - \nu_0) \Pi_1 + \nu_0 P \nonumber \\
  &=& (\eta_0 - \nu_0) \ket{\pi}\bra{\pi} + \nu_0 (\ket{\pi}\bra{\pi} + \ket{\pi^\perp}\bra{\pi^\perp}) \nonumber \\
  &=& \eta_0 \ket{\pi}\bra{\pi} + \nu_0 \ket{\pi^\perp}\bra{\pi^\perp} \nonumber \\
  &=& Z_0,
 \end{eqnarray}
 where the third line follows from $\Pi_1 = \gamma_1 \ket{\pi}\bra{\pi}$
 and $P = \ket{\pi}\bra{\pi} + \ket{\pi^\perp}\bra{\pi^\perp}$,
 and the last line follows from (\ref{eq:Zk}).
 Therefore, $\{ c_m \}$ of (\ref{eq:c12}) satisfies (\ref{eq:sum_cm}).
 \QED

\section{Supplement of (\ref{eq:sum_cm_ex}) and (\ref{eq:sin2alpha})} \label{append:alpha}

 Assume (\ref{eq:sin2alpha}); we will show that there exists $\{ \cE_m \}$ such that
 (\ref{eq:sum_cm_ex}) and $\PhiE{1}_1 = 0$ hold.

 In preparation, we show that (\ref{eq:sum_cm_ex}) is equivalent to
 \begin{eqnarray}
  \sum_{m=1}^M \cE_m \Pi_m = (\sin^2 \theta) Z_0. \label{eq:sum_cm_ex2}
 \end{eqnarray}
 Premultiplying and postmultiplying both sides of (\ref{eq:sum_cm_ex}) by $U^\dagger$ and $U$,
 respectively, yield
 \begin{eqnarray}
  \ket{s_0}\bra{s_0} \otimes \sum_{m=1}^M \cE_m \Pi_m &=& U^\dagger \ZE_0 U. \label{eq:U_sum_cm_ex_U}
 \end{eqnarray}
 Let $P_{s_0} = \ket{s_0}\bra{s_0} \otimes I_{AB}$; then, from (\ref{eq:ZE0}),
 $\ZE_0 = \PE P_{s_0} \PE$ holds.
 Thus, we have
 \begin{eqnarray}
  U^\dagger \ZE_0 U &=& U^\dagger \PE P_{s_0} \PE U ~=~ D^\dagger D, \label{eq:UZU}
 \end{eqnarray}
 where
 \begin{eqnarray}
  D &=& P_{s_0} \PE U.
 \end{eqnarray}
 The second equation of (\ref{eq:UZU}) follows from $P_{s_0} = P_{s_0}^2$.
 In contrast, from $\PE = \mL(P) = U (\ket{s_0}\bra{s_0} \otimes P) U^\dagger$,
 (\ref{eq:mL}), and (\ref{eq:U_SA}), we have
 \begin{eqnarray}
  D &=& P_{s_0} U (\ket{s_0}\bra{s_0} \otimes P) \nonumber \\
  &=& P_{s_0} (\sin\theta \ket{s_0}\bra{s_0} \otimes \ket{0}\bra{0}_A \otimes I_B)
   (\ket{s_0}\bra{s_0} \otimes P) \nonumber \\
  &=& \sin\theta \ket{s_0}\bra{s_0} \otimes (I_{AB} (\ket{0}\bra{0}_A \otimes I_B) P) \nonumber \\
  &=& \sin\theta \ket{s_0}\bra{s_0} \otimes ((\ket{0}\bra{0}_A \otimes I_B) P). \label{eq:D}
 \end{eqnarray}
 (\ref{eq:UZU}) and (\ref{eq:D}) yield
 \begin{eqnarray}
  U^\dagger \ZE_0 U &=& \sin^2\theta \ket{s_0}\bra{s_0} \otimes (P (\ket{0}\bra{0}_A \otimes I_B) P) \nonumber \\
  &=& \sin^2\theta \ket{s_0}\bra{s_0} \otimes Z_0, \label{eq:UZU2}
 \end{eqnarray}
 where the second line follows from (\ref{eq:Zk_def}).
 From (\ref{eq:U_sum_cm_ex_U}) and (\ref{eq:UZU2}), (\ref{eq:sum_cm_ex}) is equivalent to
 (\ref{eq:sum_cm_ex2}).

 Now, we show that there exists $\{ \cE_m \}$ such that
 (\ref{eq:sum_cm_ex}) and $\PhiE{1}_1 = 0$ hold.
 Let $\cE_1 = 1$ and $\cE_m = \cE_2$ for any $m \ge 3$.
 As shown in the proof of Lemma~\ref{lemma:LOCC},
 $c_m = 1$ (i.e., $c_m^{(1)} = 0$) yields $\Phi_m^{(1)} = 0$ from (\ref{eq:Phi}),
 which indicates that $\PhiE{1}_1 = 0$ holds from $\cE_1 = 1$.
 We have
 \begin{eqnarray}
  \sum_{m=1}^M \cE_m \Pi_m &=& \Pi_1 + \cE_2 (P - \Pi_1) \nonumber \\
  &=& (1 - \cE_2) \Pi_1 + \cE_2 P \nonumber \\
  &=& (\gamma_1 + (1 - \gamma_1) \cE_2) \ket{\pi}\bra{\pi} + \cE_2 \ket{\pi^\perp}\bra{\pi^\perp}, \nonumber \\
 \end{eqnarray}
 where the last line follows from $\Pi_1 = \gamma_1 \ket{\pi}\bra{\pi}$
 and $P = \ket{\pi}\bra{\pi} + \ket{\pi^\perp}\bra{\pi^\perp}$.
 Thus, from (\ref{eq:Zk}), (\ref{eq:sum_cm_ex2}) (i.e., (\ref{eq:sum_cm_ex})) is equivalent to
 \begin{eqnarray}
  \gamma_1 + (1 - \gamma_1) \cE_2 &=& \eta_0 \sin^2\theta, \nonumber \\
  \cE_2 &=& \nu_0 \sin^2\theta, \label{eq:sum_cm_ex3}
 \end{eqnarray}
 so we let $\cE_2 = \nu_0 \sin^2\theta$.
 $0 \le \cE_2 \le 1$ obviously holds.
 We can see from (\ref{eq:sin2alpha}) that (\ref{eq:sum_cm_ex3}) holds;
 therefore, (\ref{eq:sum_cm_ex}) holds.

\begin{acknowledgments}
 We thank O. Hirota and K. Kato of Tamagawa University for the useful discussions we had with them.
 T. S. U. was supported (in part) by JSPS KAKENHI (Grant No. 24360151).
\end{acknowledgments}

%\bibliography{quant}

\begin{thebibliography}{10}
\providecommand{\url}[1]{#1}
\csname url@samestyle\endcsname
\providecommand{\newblock}{\relax}
\providecommand{\bibinfo}[2]{#2}
\providecommand{\BIBentrySTDinterwordspacing}{\spaceskip=0pt\relax}
\providecommand{\BIBentryALTinterwordstretchfactor}{4}
\providecommand{\BIBentryALTinterwordspacing}{\spaceskip=\fontdimen2\font plus
\BIBentryALTinterwordstretchfactor\fontdimen3\font minus
  \fontdimen4\font\relax}
\providecommand{\BIBforeignlanguage}[2]{{%
\expandafter\ifx\csname l@#1\endcsname\relax
\typeout{** WARNING: IEEEtran.bst: No hyphenation pattern has been}%
\typeout{** loaded for the language `#1'. Using the pattern for}%
\typeout{** the default language instead.}%
\else
\language=\csname l@#1\endcsname
\fi
#2}}
\providecommand{\BIBdecl}{\relax}
\BIBdecl

\bibitem{Ben-DiV-Fuc-Mor-Rai-Cho-Smo-Woo-1999}
C.~H. Bennett, D.~P. DiVincenzo, C.~A. Fuchs, T.~Mor, E.~Rains, P.~W. Shor,
  J.~A. Smolin, and W.~K. Wootters, ``Quantum nonlocality without
  entanglement,'' \emph{Phys. Rev. A}, vol.~59, no.~2, p. 1070, 1999.

\bibitem{Wal-Sho-Har-Ved-2000}
J.~Walgate, A.~J. Short, L.~Hardy, and V.~Vedral, ``Local distinguishability of
  multipartite orthogonal quantum states,'' \emph{Phys. Rev. Lett.}, vol.~85,
  no.~23, p. 4972, 2000.

\bibitem{Gho-Kar-Roy-Sen-Sen-2001}
S.~Ghosh, G.~Kar, A.~Roy, A.~S. De, and U.~Sen, ``Distinguishability of bell
  states,'' \emph{Phys. Rev. Lett.}, vol.~87, no.~27, p. 277902, 2001.

\bibitem{Hor-Sen-Sen-Hor-2003}
M.~Horodecki, A.~S. De, U.~Sen, and K.~Horodecki, ``Local indistinguishability:
  More nonlocality with less entanglement,'' \emph{Phys. Rev. Lett.}, vol.~90,
  no.~4, p. 047902, 2003.

\bibitem{Fan-2004}
H.~Fan, ``Distinguishability and indistinguishability by local operations and
  classical communication,'' \emph{Phys. Rev. Lett.}, vol.~92, no.~17, p.
  177905, 2004.

\bibitem{Hay-Mar-Mur-Owa-Vir-2006}
M.~Hayashi, D.~Markham, M.~Murao, M.~Owari, and S.~Virmani, ``Bounds on
  multipartite entangled orthogonal state discrimination using local operations
  and classical communication,'' \emph{Phys. Rev. Lett.}, vol.~96, no.~4, p.
  040501, 2006.

\bibitem{Vir-Sac-Ple-Mar-2001}
S.~Virmani, M.~F. Sacchi, M.~B. Plenio, and D.~Markham, ``Optimal local
  discrimination of two multipartite pure states,'' \emph{Phys. Lett. A}, vol.
  288, no.~2, pp. 62--68, 2001.

\bibitem{Oga-2006}
Y.~Ogata, ``Local distinguishability of quantum states in infinite-dimensional
  systems,'' \emph{J. Phys. A: Math. Gen.}, vol.~39, no.~12, pp. 3059--3069,
  2006.

\bibitem{Che-Yan-2002}
Y.-X. Chen and D.~Yang, ``Optimally conclusive discrimination of nonorthogonal
  entangled states by local operations and classical communications,''
  \emph{Phys. Rev. A}, vol.~65, no.~2, p. 022320, 2002.

\bibitem{Ji-Cao-Yin-2005}
Z.~Ji, H.~Cao, and M.~Ying, ``Optimal conclusive discrimination of two states
  can be achieved locally,'' \emph{Phys. Rev. A}, vol.~71, no.~3, p. 032323,
  2005.

\bibitem{Nie-Chu-2000}
M.~A. Nielsen and I.~L. Chuang, \emph{Quantum computation and quantum
  information}.\hskip 1em plus 0.5em minus 0.4em\relax Cambridge: Cambridge
  university press, 2010.

\bibitem{Che-Bar-1998}
A.~Chefles and S.~M. Barnett, ``Optimum unambiguous discrimination between
  linearly independent symmetric states,'' \emph{Phys. Lett. A}, vol. 250, pp.
  223--229, 1998.

\bibitem{Eld-2003-inc}
Y.~C. Eldar, ``Mixed-quantum-state detection with inconclusive results,''
  \emph{Phys. Rev. A}, vol.~67, p. 042309, 2003.

\bibitem{Fiu-Jez-2003}
J.~Fiur\'a\v{s}ek and M.~Je\v{z}ek, ``Optimal discrimination of mixed quantum
  states involving inconclusive results,'' \emph{Phys. Rev. A}, vol.~67, p.
  012321, 2003.

\bibitem{Neu-1940}
M.~Neumark, ``Spectral functions of a symmetric operator,'' \emph{Izv. Akad.
  Nauk SSSR Ser. Mat.}, vol.~4, pp. 277--318, 1940.

\bibitem{Wal-Har-2002}
J.~Walgate and L.~Hardy, ``Nonlocality, asymmetry, and distinguishing bipartite
  states,'' \emph{Phys. Rev. Lett.}, vol.~89, no.~14, p. 147901, 2002.

\bibitem{Dua-Fen-Xin-Yin-2009}
R.~Duan, Y.~Feng, Y.~Xin, and M.~Ying, ``Distinguishability of quantum states
  by separable operations,'' \emph{IEEE Trans. Inf. Theory}, vol.~55, no.~3,
  pp. 1320--1330, 2009.

\bibitem{Chi-Dua-Hsi-2014}
E.~Chitambar, M.~Hsieh, and R.~Duan, ``When do local operations and classical
  communication suffice for two-qubit state discrimination?'' \emph{IEEE Trans.
  Inf. Theory}, vol.~60, no.~3, pp. 1549--1561, 2014.

\bibitem{Dar-Pre-Per-2005}
G.~M. D'Ariano, P.~L. Presti, and P.~Perinotti, ``Classical randomness in
  quantum measurements,'' \emph{J. Phys. A: Math. Gen.}, vol.~38, no.~26, pp.
  5979--5991, 2005.

\bibitem{Chi-Dar-Sch-2007}
G.~Chiribella, G.~M. DfAriano, and D.~Schlingemann, ``How continuous quantum
  measurements in finite dimensions are actually discrete,'' \emph{Phys. Rev.
  Lett.}, vol.~98, no.~19, p. 190403, 2007.

\bibitem{Dol-1973}
S.~J. Dolinar, ``An optimum receiver for the binary coherent state quantum
  channel,'' \emph{MIT Res. Lab. Electron. Quart. Prog. Rep.}, vol. 111, pp.
  115--120, 1973.

\bibitem{Coo-2007}
R.~L. Cook, P.~J. Martin, and J.~M. Geremia, ``Optical coherent state
  discrimination using a closed-loop quantum measurement,'' \emph{Nature}, vol.
  446, no. 7137, pp. 774--777, 2007.

\bibitem{Tak-Sas-Loo-Lut-2005}
M.~Takeoka, M.~Sasaki, P.~van Loock, and N.~L{\"u}tkenhaus, ``Implementation of
  projective measurements with linear optics and continuous photon counting,''
  \emph{Phys. Rev. A}, vol.~71, no.~2, p. 022318, 2005.

\bibitem{Tak-Sas-Lut-2006}
M.~Takeoka, M.~Sasaki, and N.~L\"{u}tkenhaus, ``Binary projective measurement
  via linear optics and photon counting,'' \emph{Phys. Rev. Lett.}, vol.~97,
  no.~4, p. 040502, 2006.

\bibitem{Nak-Usu-2012-receiver}
K.~Nakahira and T.~S. Usuda, ``Optimal receiver for discrimination of two
  coherent states with inconclusive results,'' \emph{Phys. Rev. A}, vol.~86,
  no.~5, p. 052323, 2012.

\bibitem{Nak-Usu-2014}
------, ``Implementation of nonprojective measurements on a hilbert space
  spanned by binary coherent states,'' \emph{Phys. Rev. A}, vol.~89, no.~1, p.
  012120, 2014.

\bibitem{Wit-And-Tak-Syc-Leu-2010}
C.~Wittmann, U.~L. Andersen, M.~Takeoka, D.~Sych, and G.~Leuchs,
  ``Discrimination of binary coherent states using a homodyne detector and a
  photon number resolving detector,'' \emph{Phys. Rev. Lett.}, vol. 104, p.
  100505, 2010.

\bibitem{Guh-Hab-Tak-2010}
S.~Guha, J.~L. Habif, and M.~Takeoka, ``Ppm demodulation: On approaching
  fundamental limits of optical communications,'' in \emph{Proc. IEEE Int.
  Symp. Inf. Theory (ISIT)}.\hskip 1em plus 0.5em minus 0.4em\relax IEEE, 2010,
  pp. 2038--2042.

\bibitem{Ass-Poz-Pie-2011}
N.~D.~P. A.~Assalini and G.~Pierobon, ``Revisiting the dolinar receiver through
  multiple-copy state discrimination theory,'' \emph{Phys. Rev. A}, vol.~84,
  no.~2, p. 022342, 2011.

\bibitem{Mul-Usu-Wit-Tak-Mar-And-Leu-2012}
C.~R. M{\"u}ller, M.~Usuga, C.~Wittmann, M.~Takeoka, C.~Marquardt, U.~Andersen,
  and G.~Leuchs, ``Quadrature phase shift keying coherent state discrimination
  via a hybrid receiver,'' \emph{New Journal of Physics}, vol.~14, no.~8, p.
  083009, 2012.

\end{thebibliography}
% Generated by IEEEtran.bst, version: 1.13 (2008/09/30)

\end{document}